\newcommand{\Rmnum}[1]{\expandafter\@slowromancap\romannumeral #1@}
\newtheorem{assumption}{Assumption}
\newtheorem{remark}{Remark}
\newtheorem{theorem}{Theorem}
\newtheorem{lemma}{Lemma}
\newtheorem{definition}{Definition}
\title{\LARGE \bf
	Distributionally Robust Equilibria over the Wasserstein Distance for Generalized Nash Game 
%    and its application to Dynamic Pricing of Electric Vehicle Charging Stations 
}
\author{Yixun~Wen, Yulong Gao and Boli~Chen
	%
	% \thanks{}
	%
    \thanks{Y. Wen and B. Chen are with the Department of Electronic and Electrical Engineering, University College London, UK, WC1E 6BT {\tt\small (yixun.wen.22@ucl.ac.uk; boli.chen@ucl.ac.uk)}.}%
    \thanks{Yulong Gao is with the Department of Electrical and Electronic Engineering, Imperial College London, London, UK SW7 2AZ {\tt\small(
yulong.gao@imperial.ac.uk)}.}
}
\begin{document}
% LTeX: enabled=true

\pagestyle{myheader}
\begin{titlepage}
    \centering
    \vspace*{\fill}
    {\Huge \bfseries Copyright Statement \\[0.5cm]}

    {\large This work has been submitted to the CDC 2025. Copyright may be transferred without notice, after which this version might no longer be accessible. \\[2cm]}
    \vspace*{\fill}
\end{titlepage}

\clearpage

\maketitle

\begin{abstract}
Generalized Nash equilibrium problem (GNEP) is fundamental for practical applications where multiple self-interested agents work together to make optimal decisions. In this work, we study GNEP with shared distributionally robust chance constraints (DRCCs) for incorporating inevitable uncertainties. The DRCCs are defined over the Wasserstein ball, which can be explicitly characterized even with limited sample data. To determine the equilibrium of the GNEP, we propose an exact approach to transform the original computationally intractable problem into a deterministic formulation using the Nikaido-Isoda function. Specifically, we show that when all agents' objectives are quadratic in their respective variables, the equilibrium can be obtained by solving a typical mixed-integer nonlinear programming (MINLP) problem, where the integer and continuous variables are decoupled in both the objective function and the constraints. This structure significantly improves computational tractability, as demonstrated through a case study on the charging station pricing problem.
%We extend the reformulation of DRCC over the Wasserstein ball to a game-theoretic framework. To address the integer variables introduced during the reformulation process, we subsequently derive an equivalent convexified version of the problem. After obtaining the convexified version, we investigate the case where the objectives of all agents are quadratic and demonstrate that the equilibrium can be obtained by solving the corresponding mixed-integer linear programming (MINLP) problem.
\end{abstract}

\section{Introduction}
\label{sec:intro}
Numerous engineering applications require addressing multi-agent optimization problems and games among self-interested agents, including demand management in energy systems \cite{9353238}, electric vehicle (EV) charging coordination \cite{DEORI2018150}, and multi-robot control \cite{9992806}. These usually involve a generalized Nash equilibrium problem (GNEP) due to the shared constraints among agents, which generally reflect limitations in shared resources. 

The existence and uniqueness of equilibrium and solution methods have been studied for GNEP, especially under convexity assumptions. A comprehensive survey on the existence of Generalized Nash Equilibria (GNEs), with a focus on the jointly convex case, is provided in \cite{FaccGNE}.
% QVI theory
In \cite{HARKER199181, QVI-player, QVI-jointly}, the GNEP is analyzed using the quasi-variational inequality theory. 
% Patrick T. Harker was the first to analyze the GNEP using the quasi-variational inequality (QVI) theory \cite{HARKER199181}. He also demonstrated that, for a specific class of GNEPs, the QVI formulation can be reduced to a variational inequality (VI) problem.
% If a GNEP is player-convex, the GNE is equivalent to the solution of the corresponding QVI problem \cite{QVI-player}. Moreover, if the GNEP is jointly convex, the solution of the associated VI problem, known as the variational equilibrium, is guaranteed to be a GNE \cite{QVI-jointly}.
% NI func
The Nikaido-Isoda (NI) function is another important reformulation method, first introduced in \cite{NIfunc}, which provides existence results for GNEPs. This result was later extended to player-convex GNEPs in \cite{NI-func2}.
% KKT
Apart from the two reformulation methods discussed above, a Generalized Nash Equilibrium (GNE) can also be obtained by solving the Karush-Kuhn-Tucker (KKT) conditions, which consist of the concatenated KKT conditions of all players, provided certain conditions hold \cite{KKT1, FaccGNE}. 
% Specifically, if the GNEP is player-convex, any solution of the KKT conditions corresponds to a GNE \cite{KKT1}. Furthermore, in a jointly convex GNEP, the KKT conditions of the corresponding VI problem provide a solution to the KKT, where all agents share the same dual variables \cite{FaccGNE}.

% {\color{red}lack of robustness in the presence of uncertainty, then the role of DRCC, effective tools for dealing with stochastic uncertainties, add some citations, including our TCST paper..}

Most existing Generalized Nash Equilibrium Problem (GNEP) studies lack robustness considerations and may fall short when faced with uncertainties in real-world applications. Distributionally robust chance constraints (DRCCs), which serve as an effective tool for handling stochastic uncertainties, have been widely applied across various domains \cite{DRCC1,DRCC2,DRCC3}.
However, there has been limited research on analyzing GNE in the presence of DRCCs. The GNEP framework, where strategy sets are constrained by DRCCs defined over density-based and two-moment-based ambiguity sets, is explored in \cite{DRCCgame2}. Both of these formulations can be reformulated into convex deterministic problems.
Nevertheless, these two types of DRCCs perform poorly when the available sample size is limited. This is because density-based DRCCs fail to capture out-of-sample scenarios, while two-moment-based DRCCs, which rely on mean and variance, cannot fully characterize the underlying distribution when the sample size is insufficient.
Data-driven DRCCs defined over the Wasserstein ball \cite{dWRefo} can overcome the aforementioned shortcomings. This type of DRCC has been incorporated into the GNEP framework in \cite{DRCCgame} and subsequently relaxed into a convex form through the design of a penalty function. However, this reformulation is exact only if a tailored penalty function is used, which remains a significant challenge.

This study investigates the GNEP with joint DRCCs defined over Wasserstein balls, which better account for out-of-sample uncertainty and offer stronger reliability guarantees than density- or moment-based DRCCs, particularly in data-scarce settings [15]. The novelty of the paper is mainly twofold:  
1) We propose a novel and exact reformulation approach based on the Nikaido-Isoda function. Our approach is generally applicable, without finding a penalty function as in [16].
2) We demonstrate that when agents have quadratic objectives, the robust GNE can be solved by formulating it as a mixed-integer nonlinear programming (MINLP) problem, where the integer and continuous variables are decoupled in both the objective function and the constraints. This structure significantly improves computational tractability, as demonstrated through a case study on the charging station pricing problem.
% This study investigates the generalized Nash game with DRCCs defined over Wasserstein balls, which, in general, do not admit exact convex reformulations. To address the limitations of inexact convex approximations, we propose a novel approach to transform the original computationally intractable problem into a deterministic formulation using the Nikaido-Isoda function and a rigorous reformulation of the DRCC. Specifically, we show that when all agents' objectives are quadratic in their respective variables, the equilibrium can be obtained by solving a corresponding mixed-integer nonlinear programming (MINLP) problem. Furthermore, under certain convexity assumptions, this MINLP can be solved efficiently, as demonstrated by a case study on the charging station pricing problem.

The paper is organized as follows: Section \Rmnum{2} introduces preliminaries and gives the problem statement. Section \Rmnum{3} elaborates on the DRCC reformulation in the game-theoretical framework and establishes the existence condition for equilibrium under the quadratic assumption. Section \Rmnum{4} elaborates on the case study and simulation results, and Section \Rmnum{5} concludes this work.
%-----------------------------------------------
\section{Preliminaries and Problem Statement}
 Let $\mathbb{R}^n$, ${\mathbb{R}^n}_{\geq 0}$, ${\mathbb{R}^n}_{> 0}$ denote the set of real, the non-negative real, the strict positive real $n$-dimensional vectors. $\mathbb{Z}^n$ denote the set of integer $n$-dimensional vectors. $\mathbb{B}^n$ denote the set of binary $n$-dimensional vectors. $\|\cdot\|$ denotes a general norm and $||\cdot||_*$ denotes the dual norm. $\mathbf{e}_n$ and $\mathbf{0}_n$ indicate the column vectors with $n$ entries all equal to 1 and 0, respectively, i.e., $\mathbf{e}_n = [1,1,\cdots,1]^\top\in\mathbb{R}^n$ and $\mathbf{0}_n = [0,0,\cdots,0]^\top\in\mathbb{R}^n$. $\mathbb{I}_n\in\mathbb{R}^{n\times n}$ denotes the identity matrix. $[K]$ denotes the set $[K] = \{1,2,\cdots,K\}$. $(\cdot)^+$ denotes the operator that calculates the positive part. 
 The epigraph of $f(x): \mathbb{R}^n\rightarrow \mathbb{R}$ is defined as
    \begin{equation*}
        \text{epi}f  = 
        \left\{ (x,t)\in \mathbb{R}^{n+1}\left|
        x\in \text{dom}f, f(x)\leq t
        \right.
        \right\}.
    \end{equation*}
    The convex envelope of  $f: \mathbb{R}^n\rightarrow \mathbb{R}$ is defined as
    \begin{equation*}
        f^c(x) = \inf 
        \left\{ t\left|
        (x,t)\in \text{conv epi}f
        \right.
        \right\}.
    \end{equation*}
Finally, we use $\text{conv}S$ to denote the convex hull of $S$.
\begin{definition}[\cite{DisOpt}]\label{def:hole-free}
    A set $\mathcal{S}$ is called hole-free if 
    \begin{equation*}
        \mathcal{S} = \text{conv}\mathcal{S}\cap \mathbb{Z}^n,
    \end{equation*}
i.e., all integers in the convex hull of $\mathcal{S}$ are members of $\mathcal{S}$.
\end{definition}

%\subsection{Wasserstein Distance and Wasserstein Ball}
% For subsequent theoretical analysis in Section \ref{sec:DRCC}, 
%To facilitate the illustration of DRCC, we introduce the definitions of the Wasserstein distance.
\begin{definition}
The Wasserstein distance $d_w(\mathbb{P}_a,\mathbb{P}_b)$ between two distributions, equipped with a general norm $\|\cdot\|$, is defined as the minimal transportation cost of moving $\mathbb{P}_a$ to $\mathbb{P}_b$ under the premise that the cost of moving a Dirac point mass from $\xi_a$ to $\xi_b$ amounts to $\|\xi_a-\xi_b\|$.
\begin{equation*}
    d_w(\mathbb{P}_a,\mathbb{P}_b) = \inf_{\mathbb{P}\in\mathcal{P}(\mathbb{P}_a,\mathbb{P}_b)} \int \|\xi_a-\xi_b\|d \mathbb{P}(\xi_a,\xi_b),
\end{equation*}
where %$\xi_a\sim\mathbb{P}_a$, $\xi_b\sim\mathbb{P}_b$ and 
$\mathcal{P}(\mathbb{P}_a,\mathbb{P}_b)$ represents the set of all distributions with margins $\mathbb{P}_a$ and $\mathbb{P}_b$.
\end{definition}
% \begin{definition}
%     For the set $\mathcal{\bar{A}} = \{\mathbf{x}\in \mathbb{R}^n|\mathbf{A}\mathbf{x}\geq\bm{\beta}\mathbf{\xi}+b\}$ with $\mathbf{A}\in \mathbb{R}^{m\times n}$, $\bm{\beta}\in\mathbb{R}^{m\times l}$ and $b\in\mathbb{R}^m$, the distance between one realization $\hat{\xi}_k$ and the set is defined as \cite{MI-GNEP}:
% \begin{equation}\label{dist}
% \text{dist}\left(\hat{\xi}_k,\mathcal{\bar{A}}(\mathbf{x}) \right) = \min_{j\in[m]}\frac{(\bm{\beta}_j\hat{\xi}_k+b_j-\mathbf{A}_j\mathbf{x})^+}{\|\bm{\beta}_j\|_*},
% \end{equation}
% where $\|\cdot\|_*$ is the dual norm, $\mathbf{A}_j$ and $\bm{\beta}_j$ are the $j$th row of matrix $\mathbf{A}$ and $\bm{\beta}$ respectively, and $b_j$ is the $j$th element of $b$.
% \end{definition}
\subsection{Problem Statement}
We consider a game with $I$ agents indexed by $i\in[I]$ with decision variables $x_i\in\mathbb{R}^{n_i}$ for agent $i$. Moreover, we define the collective strategy profile as $\mathbf{x}=[x_1, x_2,\cdots,x_I]^\top\in \mathbb{R}^n$ with $n = \sum_{i=1}^I n_i$ and define the rivals' strategy profile for agent $i$ as $\mathbf{x}_{-i}=[x_1, x_2,\cdots,x_{i-1},x_{i+1},\cdots,x_I]^\top\in \mathbb{R}^{n_{-i}}$ with $n_{-i} = n-n_i$. For agent $i$, it is characterized by a local strategy set $\mathcal{S}_i\in\mathbb{R}^{n_i}$ and a cost function $J_i(x_i,\mathbf{x}_{-i}):\mathbb{R}^{n_i}\times\mathbb{R}^{n_{-i}}\rightarrow\mathbb{R}$ that may depend on both its own strategy $x_i$ and its rivals' strategies $\mathbf{x}_{-i}$. Each agent $i\in [I]$ aims to minimize its cost $J_i(x_i,\mathbf{x}_{-i})$ by choosing a strategy $x_i$ in its local strategy set $\mathcal{S}_i$. For any given $\mathbf{x}\in\mathbb{R}^n$, we define:
\begin{equation*}
    \mathcal{S}(\mathbf{x})=\prod_{i=1}^I\mathcal{S}_i
    = \{
    \mathbf{y}\in\mathbb{R}^n\left|
    y_i\in\mathcal{S}_i,\quad \forall i\in[I]
    \right
    \}.
\end{equation*}
To facilitate the illustration, we also give the following preliminary assumption throughout the paper.
\begin{assumption}\label{ass:S}
    The local strategy set $\mathcal{S}_i$ is nonempty, compact, and convex for all $i\in [I]$.
\end{assumption}

\emph{This paper aims to solve the following GNEP under shared DRCC and in particular develop an efficient method to compute the equilibrium:}
% a finite number of coupling constraints shared by all agents in the form of joint distributional robust chance constraints:
% \begin{equation} 
%     \inf_{\mathbb{P}\in \mathcal{P}}\mathbb{P}\left[ 
%     \mathbf{A}\mathbf{x}<\mathbf{b}(\mathbf{\xi})
%     \right]\geq 1-\epsilon,
% \end{equation}
% where $\mathcal{P}$ is the ambiguity set including a whole family of possible distributions of $\mathbf{\xi}$. $\mathbf{A}\in \mathbb{R}^{m\times n}$, $\mathbf{b}(\cdot):\mathbb{R}^l\rightarrow \mathbb{R}^m$ is an affine function $\mathbf{b}(\mathbf{\xi}) = \bm{\beta}\mathbf{\xi}+b$ with  $\bm{\beta}\in \mathbb{R}^{m\times l}$ and $b\in\mathbb{R}^m$. $\mathbf{\xi}\in\mathbb{R}^l$ is the uncertainty under some certain probability distribution $\mathbb{P}$. $\epsilon\in(0,1)$ is the prescribed violation rate of the constraint. By describing the constraint in the joint chance constraint form, it is guaranteed that the possibility of any one of the $m$ constraints being violated is less than $\epsilon$.
\begin{equation}\label{eq:problem}
i\in [I]:
\left\{
    \begin{aligned}
    \min&_{x_i}\quad J_i(x_i,\mathbf{x}_{-i})\\
    \text{s.t.}&\quad\inf_{\mathbb{P}\in \mathcal{P}}\mathbb{P}\left[ 
    \mathbf{A}\mathbf{x}<\mathbf{b}(\mathbf{\xi})
    \right]\geq 1-\epsilon,
    \end{aligned}
    \right.
\end{equation}
where $\mathbf{\xi}\in\mathbb{R}^l$ is the uncertainty under some certain probability distribution $\mathbb{P}$, $\mathcal{P}$ is the ambiguity set including a whole family of possible distributions of $\mathbf{\xi}$. $\mathbf{A}\in \mathbb{R}^{m\times n}$ and $\mathbf{b}(\cdot):\mathbb{R}^l\rightarrow \mathbb{R}^m$ is an affine function $\mathbf{b}(\mathbf{\xi}) = \bm{\beta}\mathbf{\xi}+b$ with  $\bm{\beta}\in \mathbb{R}^{m\times l}$ and $b\in\mathbb{R}^m$. $\epsilon\in(0,1)$ is the prescribed violation rate of the constraint. By describing the constraint in the joint chance constraint form, it is guaranteed that the possibility of any one of the $m$ constraints being violated is less than $\epsilon$.
In this case, the strategy set of agent $i\in [I]$ is given by
\begin{equation*}
    \mathcal{X}_i(\mathbf{x}_{-i})= \mathcal{S}_i\!\cap\!
    \left\{
    x_i\!\in\!\mathbb{R}^{n_i}\!\left|
    \inf_{\mathbb{P}\in \mathcal{P}}\mathbb{P}\left[ 
    \mathbf{A}\mathbf{x}<\mathbf{b}(\mathbf{\xi})
    \right]\geq 1-\epsilon
    \right.
    \right
    \},
\end{equation*}
which depends on the strategies of the other agents. The coupled collective strategy set is then defined as
$
    \mathcal{X}(\mathbf{x})=    \prod_{i=1}^I\mathcal{X}_i(\mathbf{x}_{-i}).
$
The GNEP defined above can be represented by $\aleph 
= (I, (\mathcal{X}_i(\mathbf{x}_{-i}))_{i\in[I]},(J_i(x_i,\mathbf{x}_{-i}))_{i\in[I]})$. For GNEP $\aleph$, GNE is defined as follows.
\begin{definition}
    A collective strategy $\mathbf{x}^*$ is a generalized Nash equilibrium if for all $i\in [I]$
\begin{align*}
    J_i(x^*_i,\mathbf{x}^*_{-i})\leq J_i(x_i,\mathbf{x}^*_{-i}), \quad\forall x_i\in \mathcal{X}_{i}(\mathbf{x}^*_{-i}).
\end{align*}
\end{definition}
That is to say, a GNE is a collective strategy $\mathbf{x}^*$ such that no agent can improve its aim by changing its strategy $x_i^*$ to another feasible one $y_i\in\mathcal{X}_{i}(\mathbf{x}^*_{-i})$.
For the sake of further discussion on computing the GNE, let us recall the NI function below. 
\begin{definition}[\cite{NIfunc}]
    For any $\mathbf{x}, \mathbf{y}\in \mathbb{R}^n$, the NI-function is defined as:
    \begin{equation*}\label{NI}
        \Psi(\mathbf{x}, \mathbf{y}) = \sum_{i\in[I]}\left[J_i(x_i,\mathbf{x}_{-i})-J_i(y_i,\mathbf{x}_{-i})\right].
    \end{equation*}
\end{definition}
Let
\begin{equation*}\label{V}
    \Hat{V}(\mathbf{x}) = \sup_{\mathbf{y}\in\mathcal{X}(\textbf{x})} \Psi(\mathbf{x},\mathbf{y}).
\end{equation*}
% $\Hat{V}(\mathbf{x}) = \sup_{\mathbf{y}\in\mathcal{X}(\textbf{x})} \Psi(\mathbf{x},\mathbf{y})$. Notice that $\Hat{V}(\mathbf{x})\geq 0$ 
for all $\mathbf{x}\in \mathcal{X}$, then the following results hold.
\begin{theorem}[\cite{FaccGNE}]
    For a GNEP $\aleph$, the following statements are equivalent.
    \begin{enumerate}
        \item $\mathbf{x}^*$ is a generalized Nash equilibrium for $\aleph$.
        \item $\mathbf{x}^*\in \mathcal{X}(\mathbf{x}^*)$ and $\Hat{V}(\mathbf{x}^*) = 0$.
        \item $\mathbf{x}^*$ is an optimal solution of $\inf_{\mathbf{x}\in \mathcal{X}} \Hat{V}(\mathbf{x})$ with value zero.
    \end{enumerate}
\end{theorem}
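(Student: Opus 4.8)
The plan is to prove the equivalences by establishing $(1)\Leftrightarrow(2)$ directly and then obtaining $(2)\Leftrightarrow(3)$ almost for free, where throughout $\mathcal{X}$ denotes the feasible (fixed-point) set $\{\mathbf{x}:\mathbf{x}\in\mathcal{X}(\mathbf{x})\}$ on which $\hat V$ is defined. The single observation that drives everything is that $\Psi(\mathbf{x},\mathbf{x})=0$ for every $\mathbf{x}$, since each summand $J_i(x_i,\mathbf{x}_{-i})-J_i(x_i,\mathbf{x}_{-i})$ vanishes. Consequently, whenever $\mathbf{x}\in\mathcal{X}(\mathbf{x})$ the point $\mathbf{y}=\mathbf{x}$ is admissible in the supremum defining $\hat V$, so $\hat V(\mathbf{x})\geq\Psi(\mathbf{x},\mathbf{x})=0$. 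Thus $\hat V$ is nonnegative on $\mathcal{X}$, and the optimal value in statement~(3) cannot be smaller than zero.

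For $(1)\Rightarrow(2)$, I would first note that by the definition of a GNE the inclusion $x_i^*\in\mathcal{X}_i(\mathbf{x}^*_{-i})$ holds for every $i$, i.e. $\mathbf{x}^*\in\mathcal{X}(\mathbf{x}^*)$. Then for an arbitrary $\mathbf{y}\in\mathcal{X}(\mathbf{x}^*)$ each component satisfies $y_i\in\mathcal{X}_i(\mathbf{x}^*_{-i})$, so the equilibrium inequality gives $J_i(x_i^*,\mathbf{x}^*_{-i})\leq J_i(y_i,\mathbf{x}^*_{-i})$ for every $i$; summing over $i$ yields $\Psi(\mathbf{x}^*,\mathbf{y})\leq 0$. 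Taking the supremum over such $\mathbf{y}$ shows $\hat V(\mathbf{x}^*)\leq 0$, which together with $\hat V(\mathbf{x}^*)\geq 0$ forces $\hat V(\mathbf{x}^*)=0$.

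The reverse implication $(2)\Rightarrow(1)$ is where the only real care is needed. From $\hat V(\mathbf{x}^*)=0$ we obtain $\Psi(\mathbf{x}^*,\mathbf{y})\leq 0$ for every $\mathbf{y}\in\mathcal{X}(\mathbf{x}^*)$. To recover the per-agent inequality for a fixed agent $j$ and an arbitrary $x_j\in\mathcal{X}_j(\mathbf{x}^*_{-j})$, I would test this against the single-deviation point $\mathbf{y}$ defined by $y_j=x_j$ and $y_i=x_i^*$ for $i\neq j$. The key point, and the step I expect to require the most attention, is verifying that this $\mathbf{y}$ still belongs to $\mathcal{X}(\mathbf{x}^*)=\prod_i\mathcal{X}_i(\mathbf{x}^*_{-i})$: because the product is taken with the rival profile frozen at $\mathbf{x}^*_{-i}$ in every factor, feasibility of each coordinate is checked only against $\mathbf{x}^*$ and is therefore unaffected by the deviation in coordinate $j$ — the $j$-th factor is admissible by the choice of $x_j$, and every other factor is admissible since $\mathbf{x}^*\in\mathcal{X}(\mathbf{x}^*)$. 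For this $\mathbf{y}$ all summands of $\Psi$ except the $j$-th cancel, leaving $J_j(x_j^*,\mathbf{x}^*_{-j})-J_j(x_j,\mathbf{x}^*_{-j})\leq 0$. Since $j$ and $x_j$ were arbitrary, $\mathbf{x}^*$ is a GNE.

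Finally, $(2)\Leftrightarrow(3)$ follows immediately from the nonnegativity of $\hat V$ on $\mathcal{X}$ established at the outset: statement~(2) asserts that $\mathbf{x}^*$ is feasible with $\hat V(\mathbf{x}^*)=0$, and since $0$ is a global lower bound for $\hat V$ over $\mathcal{X}$, this is precisely the statement that $\mathbf{x}^*$ minimizes $\hat V$ over $\mathcal{X}$ with optimal value zero, which is statement~(3); conversely, any minimizer attaining value zero is feasible and satisfies $\hat V=0$. I would emphasize that no attainment or continuity of the supremum defining $\hat V$ is needed anywhere, since each direction uses only the inequality characterization of the supremum, so the argument goes through under Assumption~\ref{ass:S} without further regularity.
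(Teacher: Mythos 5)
Your proof is correct: the paper itself does not prove this theorem (it is imported from \cite{FaccGNE}), and your argument is exactly the standard one from that reference --- the observation that $\Psi(\mathbf{x},\mathbf{x})=0$ makes $\hat{V}$ nonnegative on the fixed-point set, summing the equilibrium inequalities gives $(1)\Rightarrow(2)$, and the product structure $\mathcal{X}(\mathbf{x}^*)=\prod_i\mathcal{X}_i(\mathbf{x}^*_{-i})$ is precisely what legitimizes the single-deviation test in $(2)\Rightarrow(1)$. Nothing further is needed; your handling of the one delicate step (feasibility of the unilateral deviation) is the right and complete justification.
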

%------------------------
%The following definitions are instrumental for the upcoming discussion.
\begin{definition}
    For a GNEP $\aleph$, the domain of the strategy set $\mathcal{X}_i(\mathbf{x}_{-i})$ is
\begin{equation*}\label{dom}
        \text{dom}\mathcal{X}_i = \left\{ \mathbf{x}_{-i}\in \mathbb{R}^{n_{-i}}\left|
        \mathcal{X}_i(\mathbf{x}_{-i})\neq \emptyset
        \right.
        \right\}.
\end{equation*}
\end{definition}
According to $\text{dom}\mathcal{X}_i$, the refined domain $\text{rdom}\mathcal{X}_i$, which can be seen as the projection of $\text{dom}\mathcal{X}_i$ onto $\prod_{j \neq i,j\in [I]}\mathcal{X}_{j}$, considers whether rivals' strategies $\mathbf{x}_{-i}$ are in their own strategy sets.
\begin{definition}
    For a GNEP $\aleph$, the refined domain of the strategy set $\mathcal{X}_i(\mathbf{x}_{-i})$ is defined by
    \begin{multline*}\label{rdom}
        \text{rdom}\mathcal{X}_i = \left\{ \mathbf{x}_{-i}\in \mathbb{R}^{n_{-i}}|
        \exists x_i\in\mathbb{R}^{n_i}: \right.\\ \left.(x_i, \mathbf{x}_{-i})\in \mathcal{X}(x_i, \mathbf{x}_{-i})
        %\right.
        \right\}.
    \end{multline*}
\end{definition}

\section{Distributionally Robust GNE over the Wasserstein Ball}\label{sec:DRCC}
This section discusses the GNE of problem \eqref{eq:problem} by leveraging the concept of the Wasserstein ball, which serves as the ambiguity set defined by the available uncertainty samples.

% Classic chance constraints require full information on the probability distribution of the uncertainty, which is impossible in many cases. Nevertheless, we can obtain samples of uncertainties based on historical data. Therefore, DRCCs are preferred in many studies. The constraint is required to be satisfied with high assurance over a whole family of probability distributions defined over the ambiguity set instead of a specific distribution in DRCC. 
%We can define the ambiguity set based on historical data. 
% Specifically, the ambiguity set designed through the Wasserstein ball centered around the empirical distribution is explored based on historical data.

%\begin{assumption}\label{data}
    The historical  data of the uncertainty $\xi$ is a set $\mathcal{U}=\{\hat{\xi}_1, \hat{\xi}_2,\cdots,\hat{\xi}_K\}$ where  $\hat{\xi}_k$'s are sampled i.i.d. at random following an unknown true distribution of $\xi$.
%\end{assumption}
% Assume we obtain $K$ iid realizations of uncertainty $\mathbf{\xi}$ from historical data, represented by a set $\mathcal{U}=\{\hat{\xi}_1, \hat{\xi}_2,\cdots,\hat{\xi}_K\}$.
The empirical distribution is $\mathbb{\hat{P}}_e= \frac{1}{K}\sum_{k = 1}^K\delta_{\hat{\xi}_k}$. 

%\begin{assumption}\label{wb}
Consider the ambiguity set defined by a Wasserstein ball centered around $\mathbb{\hat{P}}_e$ with radius $\theta>0$%\footnote{\color{blue}The selection of $\theta$ can be guided by [Theorem 3.4] in \cite{dw_theta}.}
\begin{equation*}
    \mathcal{P}_{\theta}(\mathbb{\hat{P}}_e) = \{\mathbb{P}\mid d_w(\mathbb{\hat{P}}_e,\mathbb{P})\leq \theta\}.
\end{equation*}

To this end, we address the following GNEP with the DRCC over the Wasserstein ball:
\begin{equation}\label{dw}
i\in [I]:
\left\{
    \begin{aligned}
    \min&_{x_i\in\mathcal{S}_i}\quad J_i(x_i,\mathbf{x}_{-i})\\
    &\text{s.t.}\quad    \inf_{\mathbb{P}\in \mathcal{P}_{\theta}(\mathbb{\hat{P}}_e)}\mathbb{P}
    \left[ 
    \mathbf{A}\mathbf{x}<\mathbf{b}(\mathbf{\xi})
    \right]\geq 1-\epsilon.
    \end{aligned}
    \right.
\end{equation}
% \begin{equation}\label{dw}
%     \inf_{\mathbb{P}\in \mathcal{P}_{\theta}(\mathbb{\hat{P}}_e)}\mathbb{P}
%     \left[ 
%     \mathbf{A}\mathbf{x}<\mathbf{b}(\mathbf{\xi})
%     \right]\geq 1-\epsilon.
% \end{equation}
As illustrated in \cite{dWRefo}, the DRCC \eqref{dw} defined over the Wasserstein ball can be interpreted as the idea that the minimum cost required to transport $\epsilon K$ samples from $\hat{\xi}_k\in \mathcal{U}$ to the unsafe set $\mathcal{\bar{A}} = \{\mathbf{x}\in \mathbb{R}^n|\mathbf{A}\mathbf{x}\geq\bm{\beta}\mathbf{\xi}+b\}$
should be no less than $\theta K$. 
As such, the DRCC in \eqref{dw} is equivalent to:
\begin{equation}\label{dist2}
   \sum_{i= 1}^{\epsilon K}\text{dist}\left(\hat{\xi}_{k^\prime(i)},\mathcal{\bar{A}}(\mathbf{x}) \right)\geq \theta K,
\end{equation}
%  where the distance between one realization $\hat{\xi}_{k^\prime(i)}$ and the set $\mathcal{\bar{A}}$ is defined as \cite{MI-GNEP}
% \begin{equation}\label{dist}
% \text{dist}\left(\hat{\xi}_{k^\prime(i)},\mathcal{\bar{A}}(\mathbf{x}) \right) = \min_{j\in[m]}\frac{(\bm{\beta}_j\hat{\xi}_{k^\prime(i)}+b_j-\mathbf{A}_j\mathbf{x})^+}{\|\bm{\beta}_j\|_*} 
% \end{equation}
% with $\|\cdot\|_*$ being the dual norm, $\mathbf{A}_j$ and $\bm{\beta}_j$ are the $j$th row of matrix $\mathbf{A}$ and $\bm{\beta}$ respectively, and $b_j$ is the $j$th element of $b$.
with the left-hand side defined as
$$\sum_{i= 1}^{\lfloor\epsilon K\rfloor}\!\text{dist}\!\left(\hat{\xi}_{k^\prime(i)},\mathcal{\bar{A}}(\mathbf{x}) \right)\!+\!(\epsilon K\! -\! \lfloor\epsilon K\rfloor)\text{dist}\!\left(\hat{\xi}_{k^\prime(\lfloor\epsilon K\rfloor+1)},\mathcal{\bar{A}}(\mathbf{x}) \right)$$
and $k^\prime(i)$ being the index that the samples are in the order of
\begin{equation*}
\begin{aligned}    \text{dist}\left(\hat{\xi}_{k^\prime(1)},\mathcal{\bar{A}}(\mathbf{x}) \right)\leq   \text{dist}\left(\hat{\xi}_{k^\prime(2)},\mathcal{\bar{A}}(\mathbf{x}) \right)\leq 
\cdots\\
\leq
\text{dist}\left(\hat{\xi}_{k^\prime(K)},\mathcal{\bar{A}}(\mathbf{x}) \right).
\end{aligned}
\end{equation*}
Summing the $\epsilon K $ smallest distances in \eqref{dist2} can be cast as a LP problem \cite{dWRefo}. 
% The calculation of the left-hand side of \eqref{dist2} can be formulated as an optimization program. 
By strong linear programming duality, the calculation can be transformed into its dual problem as:
\begin{equation}\label{dual}
    \begin{aligned}
        \max_{\mathbf{s},\tau}\quad&\epsilon K\tau-\mathbf{e}_K^\top\mathbf{s}\\
        \text{s.t}\quad &\text{dist}\left(\hat{\xi}_k,\mathcal{\bar{A}}(\mathbf{x}) \right)\geq \tau-s_k,\quad\forall k\in[K]\\
        &\mathbf{s}\geq 0,
    \end{aligned}
\end{equation}
where $s_k$ is the $k$th element in vector $\mathbf{s}\in\mathbb{R}^K$. The optimal value of $\epsilon K\tau-\mathbf{e}_K^\top\mathbf{s}$ equals the left-hand side of \eqref{dist2}.
\begin{lemma}[\cite{dWRefo}]
The distance of a point ${\xi}_{k^\prime(i)}$ to $\mathcal{\bar{A}}$ can be determined by 
\begin{equation}\label{dist}
\text{dist}\left(\hat{\xi}_{k^\prime(i)},\mathcal{\bar{A}}(\mathbf{x}) \right) = \min_{j\in[m]}\frac{(\bm{\beta}_j\hat{\xi}_{k^\prime(i)}+b_j-\mathbf{A}_j\mathbf{x})^+}{\|\bm{\beta}_j\|_*} 
\end{equation}
with $\|\cdot\|_*$ the dual norm, $\mathbf{A}_j$ and $\bm{\beta}_j$ are the $j$th row of matrix $\mathbf{A}$ and $\bm{\beta}$ respectively, and $b_j$ is the $j$th element of $b$.
\end{lemma}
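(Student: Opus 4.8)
The plan is to exploit the explicit polyhedral geometry of the unsafe set in the sample space. For a fixed collective strategy $\mathbf{x}$, a realization $\xi$ violates the joint constraint precisely when at least one of the $m$ rows fails, i.e. when $\mathbf{A}_j\mathbf{x}\geq\bm{\beta}_j\xi+b_j$ for some $j\in[m]$. Viewed in $\mathbb{R}^l$, the unsafe set therefore decomposes as a finite union of half-spaces,
\begin{equation*}
\mathcal{\bar{A}}(\mathbf{x})=\bigcup_{j\in[m]}\mathcal{H}_j(\mathbf{x}),\qquad \mathcal{H}_j(\mathbf{x})=\left\{\xi\in\mathbb{R}^l\mid\bm{\beta}_j\xi\leq\mathbf{A}_j\mathbf{x}-b_j\right\}.
\end{equation*}
Since the distance from a point to a union of sets equals the minimum of the distances to its members, I would immediately reduce the claim to $\text{dist}(\hat{\xi}_{k^\prime(i)},\mathcal{\bar{A}}(\mathbf{x}))=\min_{j\in[m]}\text{dist}(\hat{\xi}_{k^\prime(i)},\mathcal{H}_j(\mathbf{x}))$, so that it only remains to evaluate the distance from a single point to a single half-space under the general norm $\|\cdot\|$.

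The key step is this point-to-half-space distance. If $\hat{\xi}_{k^\prime(i)}$ already lies in $\mathcal{H}_j(\mathbf{x})$, i.e. $\bm{\beta}_j\hat{\xi}_{k^\prime(i)}+b_j-\mathbf{A}_j\mathbf{x}\leq0$, the distance is zero. Otherwise the nearest point lies on the bounding hyperplane, and I would set $d=\xi-\hat{\xi}_{k^\prime(i)}$ and minimize $\|d\|$ subject to the linear equality $\bm{\beta}_j d=\mathbf{A}_j\mathbf{x}-b_j-\bm{\beta}_j\hat{\xi}_{k^\prime(i)}$. The definition of the dual norm yields, for every feasible $d$, the inequality $|\bm{\beta}_j d|\leq\|\bm{\beta}_j\|_*\|d\|$, hence the lower bound $\|d\|\geq(\bm{\beta}_j\hat{\xi}_{k^\prime(i)}+b_j-\mathbf{A}_j\mathbf{x})/\|\bm{\beta}_j\|_*$; conversely, choosing $d$ along a direction that attains equality in the dual-norm inequality shows this bound is achieved. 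Combining the two cases through the positive-part operator gives $\text{dist}(\hat{\xi}_{k^\prime(i)},\mathcal{H}_j(\mathbf{x}))=(\bm{\beta}_j\hat{\xi}_{k^\prime(i)}+b_j-\mathbf{A}_j\mathbf{x})^+/\|\bm{\beta}_j\|_*$, and substituting into the minimum over $j$ delivers \eqref{dist}.

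The main obstacle is the general-norm point-to-hyperplane computation: one must argue carefully that it is the dual norm $\|\bm{\beta}_j\|_*$, rather than $\|\bm{\beta}_j\|$, that appears in the denominator, and that the lower bound coming from the dual-norm inequality is tight. This tightness is precisely the statement that the supremum defining the dual norm is attained, which holds in finite dimensions; I would invoke this standard fact rather than reprove it. A minor point to verify is that $\|\bm{\beta}_j\|_*\neq0$ for each $j$, i.e. every constraint genuinely depends on the uncertainty, so that each quotient is well defined.
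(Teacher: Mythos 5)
Your proof is correct, but note that the paper itself offers no proof of this lemma at all: it is imported verbatim from the cited reference \cite{dWRefo}, so there is nothing in the paper to compare against step by step. Your argument is essentially the standard one underlying that reference, and it is complete: the reduction of the distance-to-a-union to the minimum of distances is exact, and your point-to-half-space computation correctly identifies the dual norm $\|\bm{\beta}_j\|_*$ in the denominator, with tightness following from attainment of the dual-norm supremum in finite dimensions. One point worth making explicit, since you silently correct the paper here: the paper's set-builder notation $\mathcal{\bar{A}} = \{\mathbf{x}\in \mathbb{R}^n \mid \mathbf{A}\mathbf{x}\geq\bm{\beta}\mathbf{\xi}+b\}$, read literally as a componentwise vector inequality, would describe the \emph{intersection} of the half-spaces (all $m$ rows violated simultaneously), for which the $\min_{j}$ formula would be false --- the distance to an intersection of half-spaces is a genuine polyhedral projection, not a minimum of hyperplane distances. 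The correct reading, which you adopt, is that $\mathcal{\bar{A}}(\mathbf{x})$ is the complement of the joint safe set $\{\xi \mid \mathbf{A}\mathbf{x}<\mathbf{b}(\xi)\}$, i.e.\ the \emph{union} over $j\in[m]$ of the single-row violation sets; only under this reading does the lemma hold, and your decomposition $\mathcal{\bar{A}}(\mathbf{x})=\bigcup_{j\in[m]}\mathcal{H}_j(\mathbf{x})$ is the key structural fact. Your final caveat that $\|\bm{\beta}_j\|_*\neq 0$ must hold for each $j$ is also a genuine (if minor) hypothesis that neither the paper nor the lemma statement records.
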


The reformulation process indicates that an optimization problem with DRCC can be reconstructed into two levels with \eqref{dual} being the lower-level problem.
If considering the Nash game framework with shared DRCC, the problem presented in \eqref{dw} is reconstructed as a bilevel game-theoretic framework, wherein all agents participate in a Nash game at the leader level while sharing a unified follower system defined in \eqref{dual}.
Next, we aim to simplify it into a structure that is more accessible, as detailed below.

%\subsection{DRCC Reformulation}
%Due to its probabilistic nature, DRCC is challenging to handle directly. Next, we reformulate DRCC within the GNEP framework.
\begin{theorem}
    The solution of GNEP \eqref{dw} with a shared DRCC is equivalent to the solution of the following GNEP problem in the deterministic form:
\begin{subequations}\label{game_MI}
\allowdisplaybreaks
\begin{align}
&i\in [I]:\nonumber\\
&\left\{
    \begin{aligned}
    &\min_{x_i\in\mathcal{S}_i}\quad J_i(x_i,\mathbf{x}_{-i})\\    &s.t.\quad(\overline{\bm{\beta}}\hat{\xi}+\overline{b}-\sum_{i\in[I]}\overline{\mathbf{A}}_{n_i}x_i)+\overline{M}\mathbf{q}\geq\tau^{\prime}\mathbf{e}_{mK}-\overline{E}\mathbf{s}^{\prime}
    \end{aligned}
    \right.\\
&i = I+1:\nonumber\\
&\left\{
    \begin{aligned}
    &\min_{x_{(I+1)}}\quad  0\\
    &s.t.\quad
    x_{(I+1)}\in \mathcal{B}
    \end{aligned}
    \right.
    \end{align}
\end{subequations}
where $x_{(I+1)}=\quad[\tau^{\prime}, \mathbf{s}^{\prime\top}, \mathbf{q}^\top]^\top$,
\begin{equation*}
\mathcal{B}   
\!=\!
\left\{x_{(I+1)}\! \in \!\mathbb{R}^{2K+1}
\!\mid\!
h_i(x_{(I+1)})\!\text{ holds for all }
\!i\!=\!1,\ldots,5
\right\},
\end{equation*}
%\begin{subnumcases}{\mathcal{B}=}\label{SetI+1}
%    \quad\epsilon K\tau^{\prime}-\mathbf{e}_K^\top \mathbf{s}^{\prime} \geq \theta K||\bm{\beta}||_*,\label{C1}\\    (\overline{\bm{\beta}}\hat{\xi}+\overline{b}-\sum_{i\in[I]}\overline{\mathbf{A}}_{n_i}x_i)+\overline{M}\mathbf{q}\geq\tau^{\prime}\mathbf{e}_{mK}-\overline{E}\mathbf{s}^{\prime},\label{C2}\\
 %   M(\mathbf{e}_K-\mathbf{q})\geq \tau^{\prime}\mathbf{e}_K-\mathbf{s}^{\prime},\label{C3} \\
%    \mathbf{s}^{\prime}\geq 0,\label{C4}\\
%    \mathbf{q}\in\{0,1\}^K,\label{C5}
%\end{subnumcases}
with 
\begin{subnumcases}{}
    h_1:\quad\epsilon K\tau^{\prime}-\mathbf{e}_K^\top \mathbf{s}^{\prime} \geq \theta K||\bm{\beta}||_*,\label{C1}\\ 
    h_2: (\overline{\bm{\beta}}\hat{\xi}+\overline{b}-\!\sum_{i\in[I]}\overline{\mathbf{A}}_{n_i}x_i)+\overline{M}\mathbf{q}\geq\!\tau^{\prime}\mathbf{e}_{mK}-\overline{E}\mathbf{s}^{\prime},\label{C2}\\
    h_3: M(\mathbf{e}_K-\mathbf{q})\geq \tau^{\prime}\mathbf{e}_K-\mathbf{s}^{\prime},\label{C3} \\
    h_4: \mathbf{s}^{\prime}\geq 0,\label{C4}\\
    h_5: \mathbf{q}\in\{0,1\}^K,\label{C5}
\end{subnumcases}
% \begin{subequations}
% \begin{empheq}[left=\empheqlbrace]{align}
%     h_1:&\quad\epsilon K\tau^{\prime}-\mathbf{e}_K^\top \mathbf{s}^{\prime} \geq \theta K||\bm{\beta}||_*,\label{C1}\\   
%     h_2:& (\overline{\bm{\beta}}\hat{\xi}+\overline{b}-\sum_{i\in[I]}\overline{\mathbf{A}}_{n_i}x_i)+\overline{M}\mathbf{q}\geq\tau^{\prime}\mathbf{e}_{mK}-\overline{E}\mathbf{s}^{\prime},\label{C2}\\
%     h_3:& M(\mathbf{e}_K-\mathbf{q})\geq \tau^{\prime}\mathbf{e}_K-\mathbf{s}^{\prime},\label{C3} \\
%     h_4:& \mathbf{s}^{\prime}\geq 0,\label{C4}\\
%     h_5:& \mathbf{q}\in\{0,1\}^K,\label{C5}
% \end{empheq}
% \end{subequations}
% \begin{subequations}\label{SetI+1}
%     \begin{align}
%     \mathcal{B}=\Big\{\\
%     &
%     \quad\epsilon K\tau-\mathbf{e}_K^\top \mathbf{s} \geq \theta K||\bm{\beta}||_*,\label{C1}\\
%     &(\overline{\bm{\beta}}\hat{\xi}+\overline{b}-\sum_{i\in[I]}\overline{\mathbf{A}}_{n_i}x_i)+\overline{M}\mathbf{q}\geq\tau\mathbf{e}_{mK}-\overline{E}\mathbf{s},\label{C2}\\
%     &M(\mathbf{e}_K-\mathbf{q})\geq \tau\mathbf{e}_K-\mathbf{s}, \mathbf{q}\in\{0,1\}^K\Big\}.\label{C3}
%     \end{align}
% \end{subequations}
where $\overline{\bm{\beta}} = \mathbb{I}_K\otimes\bm{\beta}$, $\overline{b} = \mathbf{e}_K\otimes b$, $\overline{\mathbf{A}}_{n_i} = \mathbf{e}_K\otimes \mathbf{A}_{n_i}$ with $\mathbf{A}_{n_i}$ being the columns of $\mathbf{A}$ corresponding to $x_i$, $M\in{\mathbb{R}^n}_{> 0}$ is a suitable large positive constant, $\overline{M} = (\mathbb{I}_K\otimes\mathbf{e}_m)M$. $\tau^{\prime}\in\mathbb{R}$, $\mathbf{s}^{\prime}\in\mathbb{R}^K$, and $\mathbf{q}$ are decision variables of the (I+1)th agent. $\mathbf{q}$ is the binary auxiliary vector with $q_k$ being the $k$th element of $\mathbf{q}$, and $\overline{E} = (\mathbb{I}_K\otimes\mathbf{e}_m)$.
\end{theorem}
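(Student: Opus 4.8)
The plan is to leave the objectives $J_i$ untouched and to reformulate only the shared DRCC, then re-express the resulting deterministic coupling as a GNEP by promoting the dual-certificate variables to an auxiliary $(I{+}1)$th player. The argument splits into two stages: (A) an exact deterministic rewriting of the chance constraint in \eqref{dw} as the mixed-integer system \eqref{C1}--\eqref{C5}, and (B) a proof that the generalized equilibria of \eqref{dw} and \eqref{game_MI} are in one-to-one correspondence.

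For stage (A), I would start from the transport interpretation already recorded above: by \cite{dWRefo} the DRCC in \eqref{dw} is equivalent to \eqref{dist2}, i.e.\ the sum of the $\epsilon K$ smallest sample-to-unsafe-set distances is at least $\theta K$. Since \eqref{dual} is a maximization whose optimal value equals this sum, the DRCC holds if and only if there exists a dual-feasible pair $(\tau,\mathbf{s})$ with $\mathbf{s}\geq 0$ and objective $\epsilon K\tau-\mathbf{e}_K^\top\mathbf{s}\geq\theta K$. I would then substitute the distance formula \eqref{dist}, so that the dual constraint for sample $k$ reads $\min_{j\in[m]}(\bm{\beta}_j\hat{\xi}_k+b_j-\mathbf{A}_j\mathbf{x})^+/\|\bm{\beta}_j\|_*\geq\tau-s_k$. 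After absorbing the dual-norm normalization into rescaled variables $\tau^{\prime},\mathbf{s}^{\prime}$ (which turns the objective threshold into \eqref{C1}), the remaining task is to linearize the inner minimum together with the positive part. This is where the binary vector $\mathbf{q}$ enters: $q_k=0$ selects the regime $\min_j(\cdot)\geq\tau^{\prime}-s_k^{\prime}$ enforced row-wise by \eqref{C2}, while $q_k=1$ uses the big-$M$ slack in \eqref{C2} to deactivate that sample and instead forces $\tau^{\prime}-s_k^{\prime}\leq 0$ through \eqref{C3}. I would verify both implications: any certificate satisfying \eqref{C1}--\eqref{C5} yields a dual-feasible point with objective at least $\theta K$, and conversely any such dual point admits a consistent choice of $\mathbf{q}$; both hold once $M$ is chosen large enough, which establishes exactness of the reformulation.

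For stage (B), I would place $(\tau^{\prime},\mathbf{s}^{\prime},\mathbf{q})$ in the hands of the $(I{+}1)$th player, whose zero objective makes every feasible choice a best response and whose feasible set $\mathcal{B}$ collects \eqref{C1}--\eqref{C5}; the shared constraint \eqref{C2} is retained in each real player's problem so that it couples $x_i$ to the certificate. Invoking the NI-function equivalence recalled above, a point is a GNE exactly when it globally minimizes $\Hat{V}$ with value zero. The forward direction is then clean: given a GNE $\mathbf{x}^*$ of \eqref{dw}, stage (A) supplies a certificate making $\mathbf{x}^*$ feasible in \eqref{game_MI}, and because each real player optimizes in \eqref{game_MI} over a fixed-certificate polytope, an inner approximation of its true strategy set $\mathcal{X}_i(\mathbf{x}_{-i}^*)$, the value $\Hat{V}$ for \eqref{game_MI} is nonnegative yet dominated by the (zero) value for \eqref{dw}, forcing it to vanish.

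The hard part will be the converse inclusion, i.e.\ ruling out spurious equilibria of \eqref{game_MI}. Because \eqref{C2} freezes the certificate while a single player deviates, each real player in \eqref{game_MI} is restricted to a convex inner approximation of the genuinely non-convex set $\mathcal{X}_i(\mathbf{x}_{-i}^*)$, so a point with $\Hat{V}=0$ for \eqref{game_MI} need not a priori give $\Hat{V}=0$ for \eqref{dw}: a profitable deviation might exist in the full DRCC set yet lie outside the frozen polytope. To close this gap I would choose the certificate at the candidate equilibrium to be an optimal solution of the dual \eqref{dual}, so that the frozen polytope is supporting at $\mathbf{x}^*$ and its active samples coincide with those determining the sum in \eqref{dist2}; one then argues that any DRCC-feasible deviation decreasing $J_i$ would already be admissible under this optimal certificate, contradicting the equilibrium property in \eqref{game_MI}. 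A secondary technical point to handle carefully is the per-row normalization $\|\bm{\beta}_j\|_*$ versus the single factor $\|\bm{\beta}\|_*$ appearing in \eqref{C1}, which requires either a uniform-norm assumption on the rows of $\bm{\beta}$ or a per-row rescaling of the data $(\mathbf{A},\bm{\beta},b)$ before the big-$M$ step.
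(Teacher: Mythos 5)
Your proposal follows the same architecture as the paper's proof: your stage (A) is exactly the paper's chain (transport interpretation \eqref{dist2} $\to$ dual certificate \eqref{dual} $\to$ distance formula \eqref{dist} $\to$ dual-norm rescaling to $(\tau',\mathbf{s}')$ $\to$ big-$M$ linearization with binary $\mathbf{q}$, exact for finite $M$ because the certificate variables are bounded), and your stage (B) auxiliary indifferent player is precisely the paper's leader--follower construction \eqref{game1}--\eqref{game2}, collapsed into a Nash game with $I+1$ agents on the grounds that a player with constant objective never has a profitable deviation. Your forward direction is also sound, for the reason you give: since \eqref{C1} and \eqref{C3}--\eqref{C5} do not involve $\mathbf{x}$, any deviation $x_i$ satisfying \eqref{C2} under the frozen certificate is automatically DRCC-feasible, so the frozen-certificate polytope is an inner approximation of $\mathcal{X}_i(\mathbf{x}^*_{-i})$ and no profitable deviation can appear there.

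The gap is the converse direction, and your proposed repair does not close it. At a candidate equilibrium of \eqref{game_MI} the certificate $(\tau',\mathbf{s}',\mathbf{q})$ is part of the given strategy profile---you do not get to \emph{choose} it to be dual-optimal; and because the $(I{+}1)$th player is indifferent, any feasible certificate, including inflated ones, can appear at equilibrium. Spurious equilibria genuinely exist: take $I=1$, $J_1(x)=x$, $\mathcal{S}_1=[0,10]$, $m=l=K=1$, $\bm{\beta}=1$, $b=0$, $\mathbf{A}=-1$, $\hat\xi_1=0$, $\epsilon=\theta=0.1$, so that the DRCC reduces to $x\geq 1$. Then $(x,\tau',s',q)=(5,5,0,0)$ satisfies \eqref{C1}--\eqref{C5}, the real player cannot improve because \eqref{C2} with the frozen certificate forces $x\geq 5$, and the auxiliary player is indifferent; hence this is a GNE of \eqref{game_MI}, yet $x=5$ is not a GNE of \eqref{dw}, since deviating to $x=1$ is DRCC-feasible and strictly cheaper. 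So only the direction ``every GNE of \eqref{dw} extends to a GNE of \eqref{game_MI}'' can be proved along these lines. You should be aware that the paper's own proof does not close this gap either: it argues only mutual feasibility of \eqref{game1} and \eqref{game2} plus follower indifference, treating the problems as optimization problems rather than tracking how a single player's deviation set changes once the certificate is frozen; your attempt is in fact more explicit about the difficulty than the published argument, but neither your supporting-certificate idea nor the paper's feasibility argument delivers the stated two-sided equivalence. Your secondary observation about $\|\bm{\beta}_j\|_*$ versus the single factor $\|\bm{\beta}\|_*$ in \eqref{C1} is likewise a real issue: the rescaling step is exact only when all rows of $\bm{\beta}$ share the same dual norm or after a per-row renormalization of the data, a caveat the paper omits.
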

\begin{proof}
Defining agents $i\in[I]$ as leaders and the $(I+1)$th agent as the follower, the bilevel game is expressed as:
\begin{subequations}\label{game1}
\allowdisplaybreaks
\begin{align}
&\text{Leaders }i\in [I]: \nonumber\\
&\left\{
    \begin{aligned}
    \min_{x_i\in\mathcal{S}_i}\quad &J_i(x_i,\mathbf{x}_{-i})\\    \text{s.t.}\quad&\epsilon K\tau-\mathbf{e}_K^\top\mathbf{s}\geq\theta K,\\
    &\text{dist}\left(\hat{\xi}_k,\mathcal{\bar{A}}(\mathbf{x}) \right)\geq \tau-s_k,\quad\forall k\in[K]
    \end{aligned}
    \right.\\
    &\text{Follower }i = I+1:\nonumber\\
&\left\{
    \begin{aligned}
    \max_{\tau,\mathbf{s}}\quad &\epsilon K\tau-\mathbf{e}_K^\top\mathbf{s}\\
    \text{s.t}\quad &\text{dist}\left(\hat{\xi}_k,\mathcal{\bar{A}}(\mathbf{x}) \right)\geq \tau-s_k,\quad\forall k\in[K]\\
    &\mathbf{s}\geq 0,
    \end{aligned}
    \right.
    \end{align}
\end{subequations}
with the leaders performing a Nash game between each other. It follows that \eqref{game1} is equivalent to:
% Leaders $i\in [I]$:
% \begin{subequations}
%     \begin{align}
%     \min_{x_i\in\mathcal{S}_i}\quad &J_i(x_i,\mathbf{x}_{-i})\\    
%     \text{s.t.}\quad
%     &\text{dist}\left(\hat{\xi}_k,\mathcal{\bar{A}}(\mathbf{x}) \right)\geq \tau-s_k,\quad\forall k\in[K].
%     \end{align}
% \end{subequations}
% Follower $i = I+1$:
% \begin{subequations}
%     \begin{align}
%     \max_{\tau,\mathbf{s}}\quad &0\\
%     \text{s.t}\quad&\epsilon K\tau-\mathbf{e}_K^\top\mathbf{s}\geq \theta K    \\&\text{dist}\left(\hat{\xi}_k,\mathcal{\bar{A}}(\mathbf{x}) \right)\geq \tau-s_k,\quad\forall k\in[K]\\
%     &\mathbf{s}\geq 0.
%     \end{align}
% \end{subequations}
%------------------------------------
\begin{subequations}\label{game2}
\allowdisplaybreaks
\begin{align}
&\text{Leaders }i\in [I]:\nonumber\\
&\left\{
    \begin{aligned}
    \min_{x_i\in\mathcal{S}_i}\quad &J_i(x_i,\mathbf{x}_{-i})\\    
    \text{s.t.}\quad
    &\text{dist}\left(\hat{\xi}_k,\mathcal{\bar{A}}(\mathbf{x}) \right)\geq \tau-s_k,\quad\forall k\in[K]
    \end{aligned}
    \right.\\
    &\text{Follower }i = I+1:\nonumber\\
    &\left\{
    \begin{aligned}\label{C0I+1}
    \max_{\tau,\mathbf{s}}\quad &0\\
    \text{s.t}\quad&\epsilon K\tau-\mathbf{e}_K^\top\mathbf{s}\geq \theta K    \\&\text{dist}\left(\hat{\xi}_k,\mathcal{\bar{A}}(\mathbf{x}) \right)\geq \tau-s_k,\quad\forall k\in[K]\\
    &\mathbf{s}\geq 0.
    \end{aligned}
    \right.
    \end{align}
\end{subequations}
The equivalence of \eqref{game1} and \eqref{game2} follows from the mutual feasibility of the two problems. Suppose we have an optimal $\mathbf{x}, \tau, \mathbf{s}$ for \eqref{game2}, then $\mathbf{x}$ is also feasible for \eqref{game1}. As the objective of agent $i\in[I]$ is irrelevant to $\tau$ and $\mathbf{s}$, they also have the same optimal value. Conversely, suppose there is an optimal $\mathbf{x}$ for \eqref{game2}. As $\mathbf{x}$ is feasible for \eqref{game2}, there must exist feasible $\tau, \mathbf{s}$ satisfying the constraints of the additional agent. Then $\mathbf{x}, \tau, \mathbf{s}$ will also be feasible for \eqref{game1} with the same optimal value for agent $i\in[I]$.

As the objective of the follower remains constant at $0$, which implies that the follower is indifferent about changing its strategy regardless of any strategy of the leaders. The strategy of the leader is chosen to minimize its own cost given that the follower has no intention to deviate from any action. This aligns precisely with the condition for the Nash game, where any agent has no intention to change its strategy at equilibrium. Therefore, we can reformulate \eqref{game2} as a Nash game with $I+1$ agents.

After reformulating \eqref{dw} as the Nash game with $I+1$ agents, we plug \eqref{dist} into the second group of constraints, leading to
% \begin{equation}
% \begin{aligned}
% \min_{j\in[m]}&\frac{(\bm{\beta}_j\hat{\xi}_k+b_j-\mathbf{A}_j\mathbf{x})^+}{\|\bm{\beta}_j\|_*}\geq\tau-s_k,\quad\forall j\in [m],\forall k\in [K]\\
% =&\frac{(\bm{\beta}\hat{\xi}_k+b-\mathbf{A}\mathbf{x})^+}{\|\bm{\beta}\|_*}\geq\tau-s_k, \quad\forall k\in [K].
% \end{aligned}
% \end{equation}
\begin{eqnarray*}
&\displaystyle \min_{j\in[m]}\frac{(\bm{\beta}_j\hat{\xi}_k+b_j-\mathbf{A}_j\mathbf{x})^+}{\|\bm{\beta}_j\|_*}\geq\tau-s_k,\quad\forall k\in [K] \nonumber \\ 
&\displaystyle \Leftrightarrow \frac{(\bm{\beta}_j\hat{\xi}_k+b_j-\mathbf{A}_j\mathbf{x})^+}{\|\bm{\beta}_j\|_*}\geq\tau-s_k, \quad\forall j\in [m], \forall k\in [K]. 
\end{eqnarray*}

Notice that the maximum operator $(\cdot)^+$ introduces nonconvexity to the problem. Following the methods in \cite{dWRefo}, to deal with the nonconvexity, we first eliminate the molecule $||\bm{\beta}||_*$ by substituting $\tau$ and $\mathbf{s}$ by $\tau^{\prime}$ and $\mathbf{s}^{\prime}$, respectively. Then, we utilize the big-M method to handle the $(\cdot)^+$ operator. The constraints in \eqref{C0I+1} can be expressed as \eqref{C1} - \eqref{C5}.
% \begin{equation*}\label{bigM}
%     \begin{aligned}
%         \epsilon K\tau^{\prime}-\mathbf{e}_K^\top \mathbf{s}^{\prime} &\geq \theta K||\bm{\beta}||_*\\
%         (\bm{\beta}\hat{\xi}_k+b-\mathbf{A}\mathbf{x})+Mq_k&\geq\tau^{\prime}-s^{\prime}_k,\quad\forall k\in [K]\\
%         M(1-q_k)&\geq \tau^{\prime}-s^{\prime}_k,\quad\forall k\in [K]\\
%         \mathbf{q}\in\{0,1\}^K,& \quad
%         \mathbf{s}^{\prime}\geq 0.
%     \end{aligned}
% \end{equation*}
As proved in \cite{DRCCgame}, decision variables $\tau^{\prime}$, $\mathbf{s}^{\prime}$, and $\mathbf{q}$ are all bounded. Therefore, there exists a big enough but finite M for the reformulation to be exact. 
% Besides, the boundedness of $\tau$, $\mathbf{s}$, and $\mathbf{q}$ guarantees that the reformulation will not affect the optimal value of the problem. 
Finally, \eqref{game_MI} can be obtained by applying \eqref{C1}-\eqref{C5} to \eqref{game2}.
\end{proof}

Now, the original GNEP has been reformulated in a deterministic form. However, the binary auxiliary variable $\mathbf{q}\in\{0,1\}^K$ introduced by the big-M method induces discontinuities, posing significant challenges in solving the GNEP. These challenges are addressed in the next Theorem. %Hence, in the next section, we will discuss the way to deal with the binary variables by convexifying the corresponding strategy set.

\begin{theorem}\label{EQ}
    For GNEP \eqref{game_MI}, 
    % if the convex envelope $J^c_i(y_i,\mathbf{x}_{-i})$ of the cost function satisfies $J^c_i(y_i,\mathbf{x}_{-i}) = J_i(y_i,\mathbf{x}_{-i})$ over $\mathbf{x}_{-i}\in \text{rdom}\mathcal{X}_i$ for all $i\in[I]$, 
    let 
        \begin{equation}\label{Vc}
        \hat{V}^c(\mathbf{x}) = \sup_{\mathbf{y}\in \mathcal{X}^c(\mathbf{x})}\sum_{i = 1}^{I}\left[J^c_i(\mathbf{x})-J^c_i(y_i,\mathbf{x}_{-i})\right],
        \end{equation}
    then the following statements are equivalent.
    \begin{enumerate}
        \item $\mathbf{x}^*$ is a generalized Nash equilibrium for \eqref{game_MI}.
        \item $\mathbf{x}^*$ is a generalized Nash equilibrium for the game $\aleph^c = \left(I+1,\mathcal{X}^c(\mathbf{x}),(J^c_i)_{i\in[I+1]}\right)$, where 
       $
        \mathcal{X}^c(\mathbf{x}) =    \prod_{i=1}^I\mathcal{X}_i(\mathbf{x}_{-i})\cap \mathcal{X}^c_{I+1}(\mathbf{x}_{-(I+1)}),
    $
 $\mathcal{X}^c_{I+1}(\mathbf{x}_{-(I+1)})$ is the canonical relaxation of the original strategy set:
    \begin{equation}\label{ConvX}
    \begin{aligned}
       \mathcal{B}^c\!=\!\Big\{x_{(I+1)}\in\mathbb{R}^{2K+1}\Big|\eqref{C1}\! - \!\eqref{C4},
    %     \quad\epsilon K\tau-\mathbf{e}_K^\top \mathbf{s} \geq \theta K||\bm{\beta}||_*\\   &(\overline{\bm{\beta}}\hat{\xi}+\overline{b}-\sum_{i\in[I]}\overline{\mathbf{A}}_{n_i}x_i)+\overline{M}\mathbf{q}\geq\tau\mathbf{e}_{mK}-\overline{E}\mathbf{s}\\
    % &M(\mathbf{e}_K-\mathbf{q})\geq \tau\mathbf{e}_K-\mathbf{s}\\
    \mathbf{q}\in[0,1]^K\Big\}.
    \end{aligned}
    \end{equation}
    \item $\mathbf{x}^*$ is an optimal solution of
        \begin{equation}\label{SolVc}
            \inf_{\mathbf{x}\in \mathcal{X}(\mathbf{x})} \hat{V}^c(\mathbf{x})
        \end{equation}
    with value zero.
    % \item  $\mathbf{x}^* \in\arg \inf_{\mathbf{x}\in \mathcal{X}(\mathbf{x})} \hat{V}^c(\mathbf{x})$ and $\hat{V}^c(\mathbf{x}^*) = 0$.   
    \end{enumerate}      
\end{theorem}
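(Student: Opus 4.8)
The plan is to split the three-way equivalence into a routine part, $(2)\Leftrightarrow(3)$, handled by the Nikaido--Isoda characterization (Theorem~1), and the substantive part, $(1)\Leftrightarrow(2)$, asserting that the mixed-integer game \eqref{game_MI} and its canonical relaxation $\aleph^c$ share the same equilibria. For $(2)\Leftrightarrow(3)$ the key observation is that $\hat V^c$ in \eqref{Vc} is precisely the Nikaido--Isoda value function of $\aleph^c$: since the $(I+1)$th agent carries the constant objective $0=J^c_{I+1}$, its summand vanishes and the sum runs only over $i=1,\dots,I$. I would then check the hypotheses of Theorem~1 for $\aleph^c$ --- each $J^c_i$ is convex as a convex envelope, $\mathcal B^c$ in \eqref{ConvX} is a compact convex polytope (convexity from dropping the integrality of $\mathbf q$, compactness from the boundedness of $\tau',\mathbf s',\mathbf q$ established in the previous proof), and each $\mathcal X_i(\mathbf x_{-i})$ is nonempty, compact, and convex by Assumption~\ref{ass:S} and the linearity of \eqref{C2} --- and read off $(2)\Leftrightarrow(3)$ directly.

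For $(1)\Leftrightarrow(2)$ the two games differ only in the follower's strategy set, $\mathcal B$ versus $\mathcal B^c$, and in the leaders' objectives, $J_i$ versus the convex envelopes $J^c_i$. I would first establish that $\mathcal B$ is hole-free in the sense of Definition~\ref{def:hole-free}, so that $\mathcal B^c=\mathrm{conv}\,\mathcal B$; this relies on the decoupling of $\mathbf q$ from $(\tau',\mathbf s')$ and the big-$M$ structure of \eqref{C2}--\eqref{C3}. I would then invoke the convex-envelope exactness result of \cite{DisOpt}, by which replacing $J_i$ by $J^c_i$ and $\mathcal B$ by $\mathrm{conv}\,\mathcal B$ leaves each agent's optimal value and its integral minimizers unchanged; agent by agent this turns the no-deviation condition of \eqref{game_MI} into that of $\aleph^c$. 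The forward direction $(1)\Rightarrow(2)$ is then immediate, since $\{0,1\}^K\subset[0,1]^K$ makes an integral equilibrium relaxation-feasible and the follower, carrying objective $0$, remains optimal.

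The main obstacle is the reverse direction $(2)\Rightarrow(1)$, where the coupling bites: the leaders' feasible sets $\mathcal X_i(\mathbf x_{-i})$ depend on the follower's $(\tau',\mathbf s',\mathbf q)$ through \eqref{C2}, so relaxing $\mathbf q$ from $\{0,1\}^K$ to $[0,1]^K$ enlarges them, and a GNE of $\aleph^c$ might rest on a fractional $\mathbf q^*$ that no integral point reproduces --- a profile that is not even feasible for \eqref{game_MI}. The crux is to show the relaxation is tight exactly at equilibrium: that the fractional follower strategy can be replaced by an integral one in $\mathcal B$ without shrinking any leader's feasible set or changing any leader's best response. I would resolve this by combining hole-freeness (so that the vertices of $\mathcal B^c=\mathrm{conv}\,\mathcal B$ are integral) with the indifference of the $(I+1)$th agent (its constant objective makes every feasible re-selection of $\mathbf q$ an equally optimal best response), letting me round $\mathbf q^*$ to an integral best response while preserving the leaders' equilibrium conditions.
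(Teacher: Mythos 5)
Your two structural ingredients --- hole-freeness of $\mathcal{B}$ and the indifference of the $(I+1)$th agent --- are exactly the two hypotheses the paper verifies, but the paper then closes the argument by invoking the mixed-integer GNEP equivalence theorem of \cite{MI-GNEP} (its Theorem~2), which is stated precisely for this situation. You instead try to rebuild that result from Theorem~1 plus a rounding argument, and both of those steps have genuine gaps. First, $(2)\Leftrightarrow(3)$ is not ``routine'': applying Theorem~1 to $\aleph^c$ characterizes its equilibria as minimizers, with value zero, of $\hat V^c$ over $\mathcal{X}^c$, whereas statement (3) minimizes over the original set $\mathcal{X}(\mathbf{x})$, in which $\mathbf{q}$ is still binary. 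This is exactly the distinction stressed in Remark~1 of the paper: the canonical relaxation is applied only inside $\hat V^c$, not to the outer problem \eqref{SolVc}. Bridging $\inf_{\mathbf{x}\in\mathcal{X}}$ and $\inf_{\mathbf{x}\in\mathcal{X}^c}$ --- equivalently, showing that the relevant equilibrium of $\aleph^c$ is integer-feasible --- is the mixed-integer content of the theorem, and Theorem~1 cannot deliver it; since the follower's objective is identically zero, $\aleph^c$ may well admit equilibria with fractional $\mathbf{q}$, so the direction $(2)\Rightarrow(3)$ is precisely where the work lies.

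Second, your resolution of $(2)\Rightarrow(1)$ by rounding cannot work as described. The theorem asserts an equivalence for the \emph{same} point $\mathbf{x}^*$; replacing a fractional $\mathbf{q}^*$ by an integral vertex produces a different strategy profile, so even if the rounded profile were an equilibrium you would not have shown that $\mathbf{x}^*$ is one. Moreover, the rounding is not innocuous, because $\mathbf{q}$ enters every leader's feasible set through \eqref{C2} and the follower's own feasibility through \eqref{C1} and \eqref{C3}: increasing a component $q_k$ loosens \eqref{C2} and so enlarges the leaders' feasible sets, which can hand a leader a strictly improving deviation and destroy the equilibrium property; decreasing $q_k$ tightens \eqref{C2} and can make $x^*_i$ or $(\tau',\mathbf{s}')$ infeasible. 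The follower's indifference protects only the follower's optimality, not the leaders' coupled constraints, which is exactly where, as you say yourself, ``the coupling bites.'' The paper sidesteps this entire difficulty: after establishing hole-freeness (essentially your argument) and observing that the envelope condition $J^c_i(y_i,\mathbf{x}_{-i}) = J_i(y_i,\mathbf{x}_{-i})$ on $\text{rdom}\mathcal{X}_i$ holds trivially because the only agent with a nonconvex strategy set has identically zero cost, it cites \cite{MI-GNEP} for the full three-way equivalence rather than re-deriving it. If you want a self-contained proof, you must reproduce the argument of that cited theorem (which handles integer feasibility at equilibrium directly), not substitute Theorem~1 and a rounding heuristic for it.
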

\begin{proof}   
    According to Assumption \ref{ass:S} and the linearity of \eqref{C2}, the strategy set of the $(I+1)$th agent-whose decision variables include the integer variable $\mathbf{q}$ is the only non-convex strategy set among all agents. Therefore, the convexified overall strategy set can be expressed as:
    \begin{equation*}
        \mathcal{X}^c(\mathbf{x}) =    \prod_{i=1}^I\mathcal{X}_i(\mathbf{x}_{-i})\cap \mathcal{X}^c_{I+1}(\mathbf{x}_{-(I+1)})
    \end{equation*}
    with strategy sets for agent $i\in[I]$ remaining unchanged and $\mathcal{X}^c_{I+1}(\mathbf{x}_{-(I+1)})$ being the convex hull of $(I+1)$th agent.
%     We first define the strategy set of $(I+1)$th agent as set $\mathcal{B}$:
%     \begin{equation}\label{SetI+1}
%     \begin{aligned}
%     \mathcal{B}=\Big\{&x_{(I+1)}\in\mathbb{R}^{2K+1}\Big|
%     \quad\epsilon K\tau-\mathbf{e}_K^\top \mathbf{s} \geq \theta K||\bm{\beta}||_*,\\
%     &(\overline{\bm{\beta}}\hat{\xi}+\overline{b}-\sum_{i\in[I]}\overline{\mathbf{A}}_{n_i}x_i)+\overline{M}\mathbf{q}\geq\tau\mathbf{e}_{mK}-\overline{E}\mathbf{s},\\
%     &M(\mathbf{e}_K-\mathbf{q})\geq \tau\mathbf{e}_K-\mathbf{s}, \mathbf{q}\in\{0,1\}^K\Big\}.
%     \end{aligned}
% \end{equation}
    % We found $\mathcal{B}$ satisfies:
    %    \begin{equation}
    %     \mathcal{B} = \text{conv}\mathcal{B}\cap\mathbb{R}^{K+1}\times\mathbb{Z}^{K}.
    % \end{equation}
    % From the definition of the convex hull, we have $\mathcal{B}\subseteq\text{conv}\mathcal{B}$. 
    For the canonical relaxation $\mathcal{B}^c$ of the strategy set $\mathcal{B}$, the following relationship holds:
   % \begin{equation}\label{Bc}
   $
        \mathcal{B} = \mathcal{B}^c\cap\mathbb{R}^{K+1}\times\mathbb{Z}^{K}
   $ 
   % \end{equation}
    which further implies
    \begin{equation*}        \mathcal{B} = \mathcal{B}^c\cap\mathbb{R}^{K+1}\times\mathbb{Z}^{K}\subseteq \text{conv}\mathcal{B}\cap\mathbb{R}^{K+1}\times\mathbb{Z}^{K},
    \end{equation*}   
     due to $\mathcal{B}\subseteq\text{conv}\mathcal{B}$.
   As the canonical relaxation $\mathcal{B}^c$ is linear and convex, $\text{conv}\mathcal{B}$ is contained in $\mathcal{B}^c$ from the definition of convex hull. As a consequence, it is immediate to show that
   % \begin{equation*}
    %\text{conv}\mathcal{B}\cap\mathbb{R}^{K+1}\times\mathbb{Z}^{K}\subseteq\mathcal{B}^c\cap\mathbb{R}^{K+1}\times\mathbb{Z}^{K} = \mathcal{B}.
   % \end{equation*}
   % In conclusion, we have
    $
    %\begin{equation*}\label{HF}
        \mathcal{B}  =  \text{conv}\mathcal{B}\cap\mathbb{R}^{K+1}\times\mathbb{Z}^{K},
    %\end{equation*}
    $
    which implies that $\mathcal{B}$ is a hole-free set (see Definition~\ref{def:hole-free}).
This means the canonical relaxation $\mathcal{B}^c$ does not include additional integer points for any $\mathbf{x}_{-(I+1)}$ and coincides with the convex hull. Therefore, the convex hull of $(I+1)$th strategy set can be expressed as \eqref{ConvX}, and the convexified representation of problem \eqref{game_MI} can be represented by $\aleph^c = \left(I,\mathcal{X}^c(\mathbf{x}),(J^c_i)_{i\in[I]}\right)$. 
%--------
As illustrated in \cite[Theorem 2]{MI-GNEP}, to admit identical GNE with the original GNEP, $J^c_i(y_i,\mathbf{x}_{-i})$ for the convexified problem should satisfy $J^c_i(y_i,\mathbf{x}_{-i}) = J_i(y_i,\mathbf{x}_{-i})$ over $\mathbf{x}_{-i}\in \text{rdom}\mathcal{X}_i$ for all $i\in[I]$ at the equilibrium. Fulfilling this condition is generally challenging. In our formulation \eqref{game_MI}, the objective function associated with the nonconvex strategy set (strategy set of the $(I+1)$th agent) is zero, thereby inherently satisfying the required condition. This allows us to derive a convexified version of the original problem by simply taking the convex hull of the strategy set.
%--------
Therefore, the equivalence of three statements can be inferred from \cite[Theorem 2]{MI-GNEP}. Therefore, we can conclude Theorem \ref{EQ}.
\end{proof}
\begin{remark}
    It is worth mentioning that $\aleph^c$ is not equivalent to the fully continuous counterpart of GNEP $\aleph$ with $\mathbf{q}\in[0,1]^K$. As indicated in \eqref{SolVc}, the canonical extension is only applied to $\hat{V}^c(\mathbf{x})$. When it comes to the outer layer problem, we use the original strategy sets where $\mathbf{q}\in\{0,1\}^K$. This implies that in view of the $\text{rdom}\mathcal{X}_i$ for agent $i\in[I]$, $\mathbf{q}$ remains as a binary vector. %Accordingly, many analytical methods for convex GNEP cannot be applied directly to $\aleph^c$.
\end{remark}

\begin{remark}
    According to \cite{dw_theta}, the Wasserstein radius $\theta$ can be chosen based on the concentration bound. Specifically, the minimal Wasserstein radius $\theta$ that contains the true distribution with probability at least $1 - \epsilon$ satisfies:
    \[
    \theta \propto C\left(\frac{\text{log}(\epsilon^{-1})}{K}\right)^{1/\max\{m,2\}},
    \]
    where $m$ is the dimension of uncertainty, $a$ is the tail exponent, $C$ is a positive constant that depends explicitly on the distribution of the uncertainty and $m$. However, in practice, choosing the Wasserstein radius precisely is difficult, and some degree of manual tuning is often necessary.
\end{remark}
% {\blue
% \begin{remark}
%     Under mild regularity conditions that the uncertainty distribution is light-tailed (trivially satisfied when the support of the uncertainty is compact), the Wasserstein radius $\theta$ can be chosen based on the concentration bound given in [Theorem 3.4] of \cite{dw_theta}. Specifically, the minimal Wasserstein radius $\theta$ that contains the true distribution with probability at least $1 - \epsilon$ satisfies:
%     \[
%     \theta \propto C\left(\frac{\text{log}(\epsilon^{-1})}{K}\right)^{1/\max\{m,2\}},
%     \]
%     where $m$ is the dimension of uncertainty, $a$ is the tail exponent, $C$ is a positive constant that depends explicitly on the distribution of the uncertainty and $m$. However, in practice, choosing the Wasserstein radius precisely is difficult, and some degree of manual tuning is often necessary.
% \end{remark}
% }

%-------------------------------------------
\subsection{Solution Method}
From Theorem \ref{EQ}, finding the GNE relies on the solution to \eqref{SolVc}, which leads to the nested $\min-\max$ structure. 
% We didn't specify the form of the cost function, and the constraints previously. 
Generally speaking, this is a computationally intractable optimization problem. However, when $J_i(x_i,\mathbf{x}_{-i})$ and constraints of the convexified problem $\aleph^c$ meet certain conditions, the problem will become more manageable. 
% Therefore, in what follows, we will focus on a certain subclass of the problem with cost functions in the quadratic form and polyhedral individual strategy sets, then reformulate \eqref{SolVc} into a solvable form. 
\begin{assumption}\label{ass:Q}
       For GNEP \eqref{game_MI}, the cost functions $J_i(x_i,\mathbf{x}_{-i})$ for agent $i\in[I]$ are in the quadratic form as:
\begin{equation}\notag
    J_i(x_i,\mathbf{x}_{-i}) = \frac{1}{2}x_i^\top Q(\mathbf{x}_{-i})x_i + p^\top(\mathbf{x}_{-i})x_i + r(\mathbf{x}_{-i}),
\end{equation}
    and the strategy sets can be described by $H_i(\mathbf{x}_{-i})x_i\leq g_i(\mathbf{x}_{-i})$ in addition to \eqref{C2}.
    % $\quad(\overline{\bm{\beta}}\hat{\xi}+\overline{b}-\sum_{i\in[I]}\overline{\mathbf{A}}_{n_i}x_i)+\overline{M}\mathbf{q}\geq\tau\mathbf{e}_{mK}-\overline{E}\mathbf{s}$.
\end{assumption}
This assumption is reasonable as the quadratic cost function and polyhedron strategy set are compatible with multiple real-world problems (e.g., the energy trading problem between prosumers and the pricing problem between CSs).
For example, in the aggregative game where one agent's payoff depends on its own decision and an aggregation of all agents' decisions, the objective of each agent may also be described in a quadratic form.
% $i$ is in the form of:
% \begin{equation*}
%     J_i\left(x_i,\sum_{i=1}^Ix_i\right) = f_i\left(\sum_{i=1}^Ix_i\right)x_i+g_i(x_i).
% \end{equation*}
% where $f_i(\sum_{i}x_i)$ and $g_i(x_i)$ are the functions of $\sum_{i}x_i$ and $x_i$, respectively. In this regard, $J_i\left(x_i,\sum_{i=1}^Ix_i\right)$ can be expressed in the quadratic form if $f_i(\sum_{i}x_i)$ is a linear function of $\sum_{i}x_i$ and $g_i(x_i)$ is a quadratic function of $x_i$. 
\begin{theorem}
    Under Assumption \ref{ass:Q}, the GNEP \eqref{game_MI} is equivalent to the following MINLP problem:
\begin{equation}\label{MINLP}
    \begin{aligned}
        &\inf_{\mathbf{x}\in \mathcal{X}(\mathbf{x})} \hat{V}^c(\mathbf{x}) \\
          &= \inf_{\mathbf{x}\in \mathcal{X}(\mathbf{x}), \mathbf{\lambda}\geq 0}
          \left\{
          \sum_{i = 1}^{I}\left[\frac{1}{2}x_i^\top Q(\mathbf{x}_{-i})x_i \!+\! p^\top(\mathbf{x}_{-i})x_i \!+\! r(\mathbf{x}_{-i})\right]\right.\\
          &\left.\!+\!
          \sum_{i = 1}^{I}\!
          \left[\frac{1}{2}P_i^{*\top}(\lambda_i) Q_i^{-1}(\mathbf{x}_{-i})P_i^*(\lambda_i) \!+\! \lambda_i^\top g_i^*(\mathbf{x}_{-i})\! - \!r(\mathbf{x}_{-i})\right]\!\!\right\}\!.
    \end{aligned}
\end{equation}
where $\lambda_i = [\lambda_i^{a\top}, \lambda_i^{s\top}]^\top$, $g_i^*(\mathbf{x}_{-i}) = [g_i^\top(\mathbf{x}_{-i}), g_i^{s\top}]^\top$, 
$g_i^{s} = (\overline{\bm{\beta}}\hat{\xi}+\overline{b}-\sum_{j\in[I],j\neq i}\overline{\mathbf{A}}_{n_j}x_j)+\overline{M}\mathbf{q}-\tau\mathbf{e}_{mK}+\overline{E}\mathbf{s}$,
$P_i^*(\lambda_i) = p(\mathbf{x}_{-i})+ H_i^{*\top}(\mathbf{x}_{-i})\lambda_i$, $H_i^{*}(\mathbf{x}_{-i}) =[H_i^\top(\mathbf{x}_{-i}), \overline{\mathbf{A}}_{n_i}^\top]^\top$.
\end{theorem}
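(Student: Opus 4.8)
The plan is to build directly on Theorem~\ref{EQ}, whose third characterization reduces the computation of a GNE for \eqref{game_MI} to solving $\inf_{\mathbf{x}\in\mathcal{X}(\mathbf{x})}\hat{V}^c(\mathbf{x})$ with value zero. The whole task is therefore to evaluate the inner supremum defining $\hat{V}^c$ in \eqref{Vc} in closed form under the quadratic structure of Assumption~\ref{ass:Q}, and then fold the resulting expression back into the outer infimum so that a single joint minimization over $\mathbf{x}$ and the dual multipliers remains.

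First I would separate the inner supremum across agents. In \eqref{Vc} the leading term $J^c_i(\mathbf{x})$ does not depend on the maximizing variable $\mathbf{y}$, so it can be pulled outside the supremum. For fixed $\mathbf{x}$ the relaxed collective strategy set $\mathcal{X}^c(\mathbf{x})$ is a product over the agents, each factor $y_i$ being constrained only through the frozen rivals' profile $\mathbf{x}_{-i}$ (which for a leader includes the follower's $\tau,\mathbf{s},\mathbf{q}$); since $J^c_{I+1}\equiv 0$, the $(I{+}1)$th factor drops out and the supremum decouples into $I$ independent problems with $\sup_{y_i}[-J_i(y_i,\mathbf{x}_{-i})] = -\inf_{y_i} J_i(y_i,\mathbf{x}_{-i})$. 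This turns $\hat{V}^c(\mathbf{x})$ into $\sum_{i} J_i(x_i,\mathbf{x}_{-i})$ minus the sum of the optimal values of $I$ separate minimizations over $y_i$.

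Next I would identify each inner minimization as a convex quadratic program and dualize it. Under Assumption~\ref{ass:Q} the feasible set of $y_i$ is the polyhedron $H_i(\mathbf{x}_{-i})y_i\le g_i(\mathbf{x}_{-i})$ intersected with the relaxed shared constraint \eqref{C2}; rewriting \eqref{C2} with $y_i$ as the free variable and everything else fixed isolates the block $\overline{\mathbf{A}}_{n_i}y_i$ and collects the remainder into $g_i^{s}$, so the two blocks stack into the single system $H_i^{*}(\mathbf{x}_{-i})y_i\le g_i^{*}(\mathbf{x}_{-i})$. Forming the Lagrangian $\tfrac12 y_i^\top Q_i y_i + p^\top y_i + r + \lambda_i^\top(H_i^{*}y_i - g_i^{*})$, minimizing over the unconstrained $y_i$ gives $y_i^{\star} = -Q_i^{-1}P_i^{*}(\lambda_i)$ with $P_i^{*}(\lambda_i)=p+H_i^{*\top}\lambda_i$, and back-substitution yields the dual objective $-\tfrac12 P_i^{*\top}Q_i^{-1}P_i^{*} - \lambda_i^\top g_i^{*} + r$. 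Invoking strong duality for the convex QP then replaces $\inf_{y_i}J_i$ by the supremum over $\lambda_i\ge0$ of this dual, so that $-\inf_{y_i}J_i = \inf_{\lambda_i\ge0}[\tfrac12 P_i^{*\top}Q_i^{-1}P_i^{*} + \lambda_i^\top g_i^{*} - r]$, which is exactly the second bracketed sum in \eqref{MINLP}.

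Finally I would reassemble the pieces: substituting the dualized inner values into $\hat{V}^c$ and merging the outer $\inf_{\mathbf{x}}$ with the $I$ inner infima over $\lambda_i\ge0$ into one joint infimum produces \eqref{MINLP}. The main obstacle is justifying strong duality and the closed-form minimizer, which requires $Q_i(\mathbf{x}_{-i})\succ 0$ so that $Q_i^{-1}$ exists and the program is strictly convex, together with feasibility of each inner QP so that the dual value is attained and finite. I would also have to verify that the relaxation used inside the maximization is consistent with the outer feasible region: per the Remark following Theorem~\ref{EQ}, $\mathbf{q}$ enters the inner problem only as frozen data in $g_i^{s}$, while in the outer $\inf_{\mathbf{x}\in\mathcal{X}(\mathbf{x})}$ it remains binary, which is precisely what renders the merged problem an MINLP with the integer and continuous variables decoupled across the objective and constraints.
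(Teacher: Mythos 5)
Your proposal is correct and follows essentially the same route as the paper: decompose $\hat{V}^c$ agent-by-agent using the product structure of $\mathcal{X}^c(\mathbf{x})$ and the fact that $J^c_{I+1}\equiv 0$, dualize each inner convex QP (stacking the local polyhedral constraints with the block of \eqref{C2} in $y_i$ into $H_i^{*}y_i\le g_i^{*}$) to obtain exactly the dual in \eqref{du_i}, and merge the outer infimum with the multiplier infima into the joint MINLP \eqref{MINLP}. Your explicit Lagrangian derivation of the dual and your remarks on the conditions for strong duality ($Q_i(\mathbf{x}_{-i})\succ 0$, feasibility) make precise steps the paper states without elaboration, but the argument is the same.
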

% Consider the cost functions $J_i(x_i,\mathbf{x}_{-i})$ for agent $i\in[I]$ are in the quadratic form as:
% \begin{equation}
%     J_i(x_i,\mathbf{x}_{-i}) = \frac{1}{2}x_i^\top Q(\mathbf{x}_{-i})x_i + p^\top(\mathbf{x}_{-i})x_i + r(\mathbf{x}_{-i}).
% \end{equation}
% It should be noted that for one kind of widely used game called the aggregative game, its payoffs depend on the agent's own strategy and an aggregate of strategy of all agents is usually in the form of:
% \begin{equation}
%     J\left(x_i,\sum_{i}x_i\right) = f\left(\sum_{i}x_i\right)x_i+g(x_i),
% \end{equation}
% which can be expressed in quadratic form if $f(\sum_{i}x_i)$ is a linear function of $\sum_{i}x_i$ and $g(x_i)$ is in the quadratic form. 

% After specifying the form of the cost function, we assume the strategy sets of agent $i\in[I]$ except for the shared constraint $\quad(\overline{\bm{\beta}}\hat{\xi}+\overline{b}-\sum_{i\in[I]}\overline{\mathbf{A}}_{n_i}x_i)+\overline{M}\mathbf{q}\geq\tau\mathbf{e}_{mK}-\overline{E}\mathbf{s}$, are in the form of $H_i(\mathbf{x}_{-i})x_i\leq g_i(\mathbf{x}_{-i})$.
\begin{proof}
   As \eqref{Vc} is naturally separable across all agents, we can write it as:
\begin{equation*}\label{Vc_ind}
    \begin{aligned}
        \hat{V}^c(\mathbf{x}) &= \sup_{\mathbf{y}\in \mathcal{X}^c(\mathbf{x})}\sum_{i = 1}^{I+1}\left[J^c_i(\mathbf{x})-J^c_i(y_i,\mathbf{x}_{-i})\right]\\
         &= \sum_{i = 1}^{I+1}J^c_i(\mathbf{x})-
         \inf_{\mathbf{y}\in \mathcal{X}^c(\mathbf{x})}\sum_{i = 1}^{I+1} J^c_i(y_i,\mathbf{x}_{-i})\\
         &= \sum_{i = 1}^{I+1}J^c_i(\mathbf{x})-
         \sum_{i = 1}^{I+1} \inf_{y_i\in \mathcal{X}_i^c(\mathbf{x}_{-i})}J^c_i(y_i,\mathbf{x}_{-i}).
         % \\
         % &= \sum_{i = 1}^{I+1}J^c_i(\mathbf{x})-
         % \sum_{i = 1}^{I+1} \sup_{\lambda_i\geq 0}J^d_i\\
         % &= \inf_{\mathbf{\lambda}\geq 0}\left[\sum_{i = 1}^{I}J^c_i(\mathbf{x})-
         % \sum_{i = 1}^{I} J^d_i\right],  
    \end{aligned}
\end{equation*}
Let $J^d_i$ be the cost function of the dual problem for the $i$th agent, which yields 
\begin{equation*}
    \begin{aligned}
        \hat{V}^c(\mathbf{x}) &=\sum_{i = 1}^{I+1}J^c_i(\mathbf{x})-
         \sum_{i = 1}^{I+1} \sup_{\lambda_i\geq 0}J^d_i\\
         &= \inf_{\mathbf{\lambda}\geq 0}\left[\sum_{i = 1}^{I}J^c_i(\mathbf{x})-
         \sum_{i = 1}^{I} J^d_i\right].
    \end{aligned}
\end{equation*}
As $J^c_{I+1}(\mathbf{x}) = 0$, we can simply change $\sup$ to $\inf$ without changing it into its dual problem. 
In view of \eqref{SolVc}, we obtain a one-layer optimization problem:
\begin{equation}\label{infV}
    \begin{aligned}
        &\inf_{\mathbf{x}\in \mathcal{X}(\mathbf{x})} \hat{V}^c(\mathbf{x})\\
        &=\inf_{\mathbf{x}\in \mathcal{X}(\mathbf{x})}\inf_{\bm{\lambda}\geq 0}\left[\sum_{i = 1}^{I}J^c_i(\mathbf{x})-
         \sum_{i = 1}^{I} J^d_i(\lambda_i)\right] \\
         &=\inf_{\mathbf{x}\in \mathcal{X}(\mathbf{x}), \bm{\lambda}\geq 0}\left[\sum_{i = 1}^{I}J^c_i(\mathbf{x})-
         \sum_{i = 1}^{I} J^d_i(\lambda_i)\right].
    \end{aligned}
\end{equation}
Under Assumption \ref{ass:Q}, the optimization problem of agent $i\in[I]$ becomes
\begin{equation*}
\left\{
    \begin{aligned}
    &\min_{x_i\in\mathbb{R}^{n_i}}\quad \frac{1}{2}x_i^\top Q(\mathbf{x}_{-i})x_i + p^\top(\mathbf{x}_{-i})x_i + r(\mathbf{x}_{-i})\\   
    &s.t.\quad H_i(\mathbf{x}_{-i})x_i\leq g_i(\mathbf{x}_{-i})\\  &\quad\quad(\overline{\bm{\beta}}\hat{\xi}+\overline{b}-\sum_{i\in[I]}\overline{\mathbf{A}}_{n_i}x_i)+\overline{M}\mathbf{q}\geq\tau\mathbf{e}_{mK}-\overline{E}\mathbf{s},
    \end{aligned}
    \right.
\end{equation*}
and its dual problem becomes
\begin{equation}\label{du_i}
\left\{
    \begin{aligned}
       \sup_{\mathbf{\lambda_i}\in \mathbb{R}^{n_i}}\!\! \!&\!-\!\frac{1}{2}P_i^{*\top}(\lambda_i) Q_i^{-1}(\mathbf{x}_{-i})P_i^*(\lambda_i)\! -\! \lambda_i^\top g_i^*(\mathbf{x}_{-i})\! + \!r(\mathbf{x}_{-i})\\
       &s.t.\quad \mathbf{\lambda_i}\geq 0.
    \end{aligned}
    \right.
\end{equation}
% where $\lambda_i = [\lambda_i^{a\top}, \lambda_i^{s\top}]^\top$, $H_i^{*}(\mathbf{x}_{-i}) =[H_i^\top(\mathbf{x}_{-i}), \overline{\mathbf{A}}_{n_i}^\top]^\top$, $g_i^*(\mathbf{x}_{-i}) = [g_i^\top(\mathbf{x}_{-i}), g_i^{s\top}]^\top$, 
% $g_i^{s} = (\overline{\bm{\beta}}\hat{\xi}+\overline{b}-\sum_{j\in[I],j\neq i}\overline{\mathbf{A}}_{n_j}x_j)+\overline{M}\mathbf{q}-\tau\mathbf{e}_{mK}+\overline{E}\mathbf{s}$,
% $P_i^*(\lambda_i) = p(\mathbf{x}_{-i})+ H_i^{*\top}(\mathbf{x}_{-i})\lambda_i$.
By substituting \eqref{du_i} into \eqref{infV}, we obtain the MINLP \eqref{MINLP}.
\end{proof}

Notice that the nonlinearity in the objective of \eqref{MINLP} arises from the cross terms of the continuous variables, and the integer variables do not couple with these continuous variables, thereby reducing the problem’s overall complexity.
\begin{remark}
    In the case that $Q(\mathbf{x}_{-i})$ is a constant matrix, $p(\mathbf{x}_{-i})$ and $r(\mathbf{x}_{-i})$ are linear functions of $\mathbf{x}_{-i}$. Equation \eqref{MINLP} can be reduced to a problem with a quadratic objective containing cross terms of continuous variables, which is a more tackleable form. When $Q(\mathbf{x}_{-i})$ and $p(\mathbf{x}_{-i})$ are both constant, $r(\mathbf{x}_{-i})$ is a linear function of $\mathbf{x}_{-i}$, \eqref{MINLP} can be reduced to a mixed-integer linear programming problem.
\end{remark}

%----------------------------------------
\section{Case Study}
The proposed method is examined in this section to solve a pricing problem among EV charging stations (CSs).
%\subsection{CS Pricing Problem}\label{eg_EV}
Consider a charging network with $I$ non-cooperative CSs. In each time slot, every CS independently sets its charging price to maximize its profit $J_i$, which is modeled as the revenue generated from providing charging services to EVs minus the cost incurred from purchasing electricity. As such, each CS determines the pricing strategy by solving  
%Due to the non-cooperative nature of this decision-making process, the interaction among CSs can be modeled as a generalized Nash game. By modeling $J_i$ as the revenue generated from providing charging services to electric vehicles (EVs) minus the cost incurred from purchasing electricity. Therefore, the objective of each CS can be defined as
% \begin{equation}
        %     \min_{c^s_i,u_i} J_i =  -c^s_iu_iE^{ev}+c^b(N^0_i+u_i)p^{EV},
% \end{equation}
\begin{equation*}
    \min_{c^s_i} -J_i =  -(c^s_i-c^b)(N^0_i+u_i)E_{d},
\end{equation*}
where $c^s_i$ and $c^b$ are the charging price set by CS and the purchasing price of the electricity, respectively. $u_i$ is arriving EVs, $N^0_i$ is the number of onsite EVs at the beginning of this time slot, and $E_{d}$ is the average electricity demand for each EV. 
The number of arriving EVs $u_i(t)$ is influenced by the charging price of the CS, as modeled by 
$
    u_i = u^0_{i} - \alpha^u(c^s_i-c^0),
$
with $u^0_{i}$ the nominal number of EV arrivals at CS $i$, $c^0$ being the nominal electricity purchasing price from the grid, and  $\alpha^u$ is a coefficient capturing the sensitivity of EV charging demand to price variations \cite{10.3389/fenrg.2023.1343311}. %The equation shows that as the charging price rises, fewer EVs charge at the station.
%The total charging demand for each time slot is within a certain range represented by $\underline{u}$ and $\overline{u}$. 
To ensure the overall charging demand is satisfied, we impose a coupled constraint on the total number of EV arrivals, which incentivizes charging stations to price strategically and maintain network efficiency
\begin{equation}\label{u_demand}
    \sum_{i=1}^I u_i \geq \underline{u}+ \delta_u,
\end{equation}
where $\underline{u}$ is the lower limit of overall charging demand and $\delta_u$ represents the uncertainty in demand arising from various factors, such as traffic conditions and weather. According to \cite{9353238}, the electricity price is linearly dependent on the demands of prosumers, as modeled by
% \begin{align}
%     c^b = c^0 + \alpha^c\sum_{i=1}^I(N_i^0+u_i)p^{EV},
% \end{align}
$
    c^b = c^0 + \alpha^c\sum_{i=1}^I(u_i-u_i^0),
$
where $\alpha^c$ is a coefficient representing the relation between the EV charging demand and electricity price. Moreover, the number of onsite EVs has to be restricted by
$
    0 \leq N_i^0+u_i \leq \overline{N}_i,
$
with $\overline{N}_i$ the capacity of CS $i$.
To deal with the uncertainties $\delta_u$, \eqref{u_demand} can be reformulated as a DRCC to ensure a low probability of constraint violation
\begin{equation*}\label{Pu}
    \inf_{\mathbb{P}\in \mathcal{P}}\mathbb{P}\left[ \alpha^u\mathbf{e}_I^\top\mathbf{c}^s\leq -\delta_u+\hat{u}\right]\geq 1-\epsilon,
\end{equation*}
where $\mathbf{c}^s = \begin{bmatrix}   c_1^s&c_2^s&\cdots&c_I^s
\end{bmatrix}^\top$, $\hat{u} =  -\underline{u}+\sum_{i=1}^I u_i^0+\alpha^uc^0I$, and $\mathcal{P}$ is the ambiguity set of $\delta_u$. 
%---The following is a draft for the coding---
% \subsection{Solution of CS problem}
% The objective function can be expressed as:
% \begin{equation}
%     J_i = \frac{1}{2}Q_ic_i^{s2} + p_i(\mathbf{c}^s_{-i})c_i^s + r_i(\mathbf{c}^s_{-i}),
% \end{equation}
% where $Q_i = (\alpha^p+1)\alpha^up^{EV}$, $r_i(\mathbf{c}^s_{-i}) = (-\alpha^p\sum_{j\neq i}c_j^s+c^0)(N_i^0+u_i^0)p^{EV}$, and $p_i(\mathbf{c}^s_{-i}) = [-\alpha^u(-\alpha^p\sum_{j\neq i}c_j^s+c^0)+(N_i^0+u_i^0)(-\alpha^p-1)]p^{EV}$, with $\alpha^p = \alpha^c\alpha^up^{EV}$.
% The DRCC shared constraint:
% $$\alpha^u\mathbf{e}_{I}^\top\mathbf{c}^s-du-\sum_{i=1}^I u_i^0+\overline{u}(t)+Mq_k\geq\tau-s_k$$
% $$-\alpha^u\mathbf{e}_{I}^\top\mathbf{c}^s+du+\sum_{i=1}^I u_i^0-\underline{u}(t)+Mq_k\geq\tau-s_k$$
%%---------------------------------------
%\subsection{Simulation Results}
%In this section, we apply the proposed method to solve the example introduced in Section \ref{eg_EV}. 
The resulting MINLP problem is run on AMD Ryzen Threadripper PRO 7965WX with 24 Cores, 48 Threads, 4.2GHz Base, 5.3GHz Turbo, and solved using Bonmin \cite{BONAMI2008186}. For numerical simulation, we consider a scenario of three identical charging stations with $u_0=50$, $\underline{u}=80$, $\alpha^u=500$, $\alpha^c=5e^{-4}$, $\overline{N} = 50$, $N_0=0$, and $c_0 = 0.12 \pounds$/kWh.

% \begin{table}[tbh!]
% \label{table:prmt}
% \centering
% \caption{Simulation Parameters}
% \vspace{-3mm}
% \begin{tabular}{|c|c||c|c|}
%  \hline
%  Number of agents & 3 & $c_0$ & 0.12 \pounds/kWh\\
%  \hline
%  $u_0$ & 50 & $\underline{u}$ & 80 \\
%  \hline
%  $\alpha^u$ & 500 & $\alpha^c$ & 5e-4 \\
%  \hline
%  $\overline{N}$ & 50 & $N_0$ & 0 \\
%  \hline
% \end{tabular}
% \end{table}
% \begin{figure}[htb!]
%     \centering
%     \includegraphics[width=1.0\linewidth]{Results/GNEP_N0.eps}\\
%     \caption{\label{fig:R1} The relation between the change of the GNE and the initial occupation for 1st charging station.}
% \end{figure}

% The impact of various parameters on the position of GNEs is assessed by varying $N_1^0$, $\underline{u}$, and $\epsilon$, respectively.
The results in Table \ref{table:GNE} indicate that $\underline{u}$ and $\epsilon$, associated with the shared constraint, do not usually affect the position of GNEs. However, if these parameters tighten the shared constraint excessively, then the GNEs will no longer exist.
\begin{table}[tbh!]
\centering
\caption{GNE position for different parameters}
\vspace{-3mm}
\begin{tabular}{|c|c|c|c|}
 \hline
 \multirow{2}*{$N_1^0$} & price for CS 1 & price for CS 2, 3 & \multirow{2}*{objective}\\
 ~ & (\pounds/kWh) & (\pounds/kWh) & ~\\
 \hline
 15 & 0.176944 & 0.160278 & 5.33E-15\\ 
 \hline
 30 & 0.192222 & 0.158889 & 8.88E-15\\ 
 \hline
 45 & 0.21 & 0.157073 & 0.0037 (Non-exist)\\ 
 \hline
 \hline
 $\underline{u}$ & \multicolumn{2}{c|}{price for each CS (\pounds/kWh)} & objective\\
 \hline
 50 & \multicolumn{2}{c|}{0.161667} & 2.66E-15\\
 \hline
 70 & \multicolumn{2}{c|}{0.161667} & 7.11E-15\\
 \hline
 90 & \multicolumn{2}{c|}{0.15999999} & 7.50E-03 (Non-exist)\\
 \hline
 \hline
 $\epsilon$ & \multicolumn{2}{c|}{price for each CS (\pounds/kWh)} & objective\\
 \hline
 0.1 & \multicolumn{2}{c|}{0.161667} & 4.44E-15\\
 \hline
 0.05 & \multicolumn{2}{c|}{0.161667} & 1.07E-14\\
 \hline
 0.01 & \multicolumn{2}{c|}{0.159799} & 0.00941894 (Non-exist)\\
 \hline
\end{tabular}
\label{table:GNE}
\end{table}
% We assess the impact of various parameters on the positioning of the GNE by varying the value of $N_1^0$ for the first CS, $\underline{u}$, and $\epsilon$, respectively. 
% As shown in Table \ref{table:GNE}, the values of $\underline{u}$ and $\epsilon$, both associated with the shared constraint, do not alter the position of the GNE. However, when these parameters lead to an excessively tight shared constraint, specifically, one that excludes the previously existing GNE, the GNE ceases to exist. 
On the other hand, increasing $N^0_1$ makes the first CS different from the other two CSs, leading to a change in the position of the GNE. As $N^0_1$ increases, the charging price at the first CS rises, while the prices at the other two decrease. This trend is intuitive, as a larger $N^0_1$ reduces available capacity, prompting the first CS to set a higher price to reduce the coming EVs. Furthermore, the existence of a GNE is tied to the objective function: a GNE exists when the objective attains zero.

Finally, to evaluate the scalability of the approach, 
%computational efficiency of solving the MINLP problem at different scales, 
we vary the number of CSs and compare the corresponding computation time and the GNE. The results are illustrated in Fig.~\ref{fig:fig1}, which shows that the computation time remains manageable even as the number of CSs increases to $500$. Notably, the GNE value decreases monotonically as the number of agents increases and converges eventually to the nominal electricity purchasing price, given a fixed number of vehicles.
\begin{figure}[htp!]
	\includegraphics[width=\columnwidth]{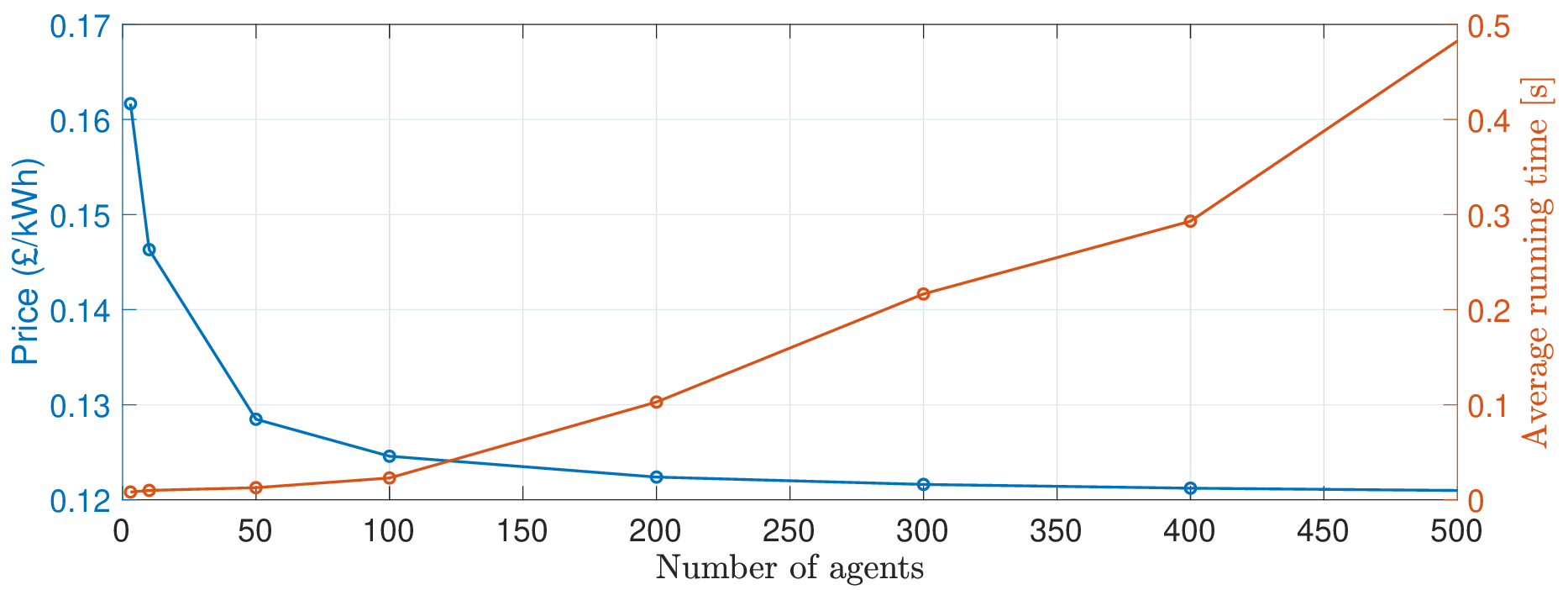}\\[-4.5ex]
	\centering
	\caption{Average computation time and the resulting GNE.}
	\label{fig:fig1}
\end{figure}
% In addition, the values of GNEs approach $c^0$ as the number of CS increases. This can be proved by assuming the GNE price of all CS is identical, and the optimal solution for each CS decreases as $I$ increases.
%\begin{table}[tbh!]
%\label{table:Sol_Time}
%\centering
%\caption{\footnotesize Value of GNE and Computation Time % for different numbers of agents
%}
%\vspace{-3mm}
%\begin{tabular}{|c|c|c|}
% \hline
% Number of agents & price (\pounds/kWh) & Computation Time (s) \\
% \hline
% 3 & 0.161667 & 0.008185387
% \\ 
% \hline
% 10 & 0.146316
% & 0.009926558 \\ 
% \hline
% 50 & 0.128475
% & 0.012771606 \\ 
% \hline
% 100 & 0.124587 & 0.022969246 \\ 
% \hline
%\end{tabular}
%\end{table}
%----------------------------------------------------------
\section{Conclusion}
This paper deals with the GNEP with joint DRCCs over a Wasserstein ball. An exact reformulation approach is proposed to rewrite the original computationally intractable problem into a deterministic form using the Nikaido-Isoda function. It is shown that the equilibrium of the GNEP can be obtained by solving a MINLP, provided that the individual agents’ objectives are quadratic in their respective variables. Since the integer and continuous variables are decoupled in the MINLP, the problem becomes significantly more computationally tractable compared to the general case, as verified by a case study on the charging station pricing problem. Furthermore, under certain convexity assumptions, we show that this MINLP can be solved efficiently, as demonstrated by a case study on the charging station pricing problem. One interesting future direction is to generalize the proposed method to  problems that cannot be transformed into their dual form.
% \textcolor{red}{add two sentences to discuss the future directions?}
\bibliographystyle{IEEEtran}
\bibliography{ref}
\end{document}